% ****** Start of file apssamp.tex ******
%
%   This file is part of the APS files in the REVTeX 4.1 distribution.
%   Version 4.1r of REVTeX, August 2010
%
%   Copyright (c) 2009, 2010 The American Physical Society.
%
%   See the REVTeX 4 README file for restrictions and more information.
%
% TeX'ing this file requires that you have AMS-LaTeX 2.0 installed
% as well as the rest of the prerequisites for REVTeX 4.1
%
% See the REVTeX 4 README file
% It also requires running BibTeX. The commands are as follows:
%
%  1)  latex apssamp.tex
%  2)  bibtex apssamp
%  3)  latex apssamp.tex
%  4)  latex apssamp.tex
%
\documentclass[%
 reprint,
%superscriptaddress,
%groupedaddress,
%unsortedaddress,
%runinaddress,
%frontmatterverbose, 
%preprint,
%showpacs,preprintnumbers,
%nofootinbib,
%nobibnotes,
%bibnotes,
 amsmath,amssymb,
 aps,
%pra,
%prb,
%rmp,
%prstab,
%prstper,
%floatfix,
]{revtex4-1}

\usepackage{natbib}
\usepackage{graphicx}% Include figure files
\usepackage{dcolumn}% Align table columns on decimal point
\usepackage{bm}% bold math
\usepackage{mathtools}
\usepackage{subfigure}
\usepackage{amsthm}
\usepackage{amsmath}
\usepackage{amssymb}

\newtheorem*{thm}{Theorem}

%\usepackage{hyperref}% add hypertext capabilities
%\usepackage[mathlines]{lineno}% Enable numbering of text and display math
%\linenumbers\relax % Commence numbering lines

%\usepackage[showframe,%Uncomment any one of the following lines to test 
%%scale=0.7, marginratio={1:1, 2:3}, ignoreall,% default settings
%%text={7in,10in},centering,
%%margin=1.5in,
%%total={6.5in,8.75in}, top=1.2in, left=0.9in, includefoot,
%%height=10in,a5paper,hmargin={3cm,0.8in},
%]{geometry}

\begin{document}

\preprint{APS/123-QED}

\title{Self-Avoiding Pruning Random Walk on Signed Network}% Force line breaks with \\

\author{Huijuan Wang}

\author{Cunquan Qu}%
 \email{c.qu-1@tudelft.nl; cqq890708@gmail.com}
\altaffiliation[Also at ]{%
 School of Mathematics, Shandong University, Jinan, 250100, China.
}%

\author{Chongze Jiao}
\author{Wioletta Rusze}
\affiliation{Faculty of Electrical Engineering, Mathematics, and Computer Science, Delft University of Technology, Mekelweg 4, Delft, The Netherlands, 2628 CD.}%Lines break automatically or can be forced with \\%

\date{\today}% It is always \today, today,
             %  but any date may be explicitly specified

\begin{abstract}
A signed network represents how a set of nodes are connected by two logically contradictory types of links: positive and negative links. In a signed products network, two products can be complementary (purchased together) or substitutable (purchased instead of each other). Such contradictory types of links may play dramatically different roles in the spreading process of information, opinion, behavior etc. In this work, we propose a Self-Avoiding Pruning (SAP) random walk on a signed network to model e.g. a user's purchase activity on a signed products network. A SAP walk starts at a random node. At each step, the walker moves to a positive neighbour that is randomly selected and its previously visited node together with its negative neighbours are removed. We explored both analytically and numerically how signed network topological features influence the key performance of a SAP walk: the evolution of the pruned network resulted from the node removals, the length of a SAP walk and the visiting probability of each node. These findings in signed network models are further verified in two real-world signed networks. Our findings may inspire the design of recommender systems regarding how recommendations and competitions may influence consumers' purchases and products' popularity. 
% \begin{description}
% \item[Usage]
% Secondary publications and information retrieval purposes.
% \item[PACS numbers]
% May be entered using the \verb+\pacs{#1}+ command.
% \item[Structure]
% You may use the \texttt{description} environment to structure your abstract;
% use the optional argument of the \verb+\item+ command to give the category of each item. 
% \end{description}
\end{abstract}

\pacs{Valid PACS appear here}% PACS, the Physics and Astronomy
                             % Classification Scheme.
%\keywords{Suggested keywords}%Use showkeys class option if keyword
                              %display desired
\maketitle

%\tableofcontents

\section{\label{sec:intro}Introduction}
The concept of multi-layer networks has been proposed in 2010 \cite{Interdependent_Nature2010,kivela2014multilayer,
PhysRevX_mathMultilayer,sahneh2013generalized,Cardillo2013} to capture different types of relationships/links among the same set of nodes. For example, the rapid development of the Internet, smart phones and information technology has facilitated the boost of online platforms, such as Facebook and YouTube, for communications, creating and sharing information and knowledge. Users may participate in one or several online networks besides their physical contacts forming a multi-layer network where the nodes represent the users and the links in each layer represent a specific type of connections such as physical contacts and online follower-followee relationships. Such multi-layer networks support the spreading of e.g. information, behavioural patterns, opinions, fashion within each layer respectively and allow as well these spreading processes on different layers to interact, introducing new phenomena that dramatically differ from a single spreading process on a single network \cite{de2018author,de2016physics,sole2016random,PhysRevE.86.026106,6580178,wang2013effect,
cozzo2013contact,DaqingLi,li2014epidemics,competingprocess,dynamicinterplay,Li2013,
de2017disease,1367-2630-19-7-073039}.

Signed networks is a special type of two-layer networks where the same set of nodes are connected by two logically contradictory types of links, so called positive and negative links. The positive and negative links may represent friendly and antagonistic interactions respectively in a signed social network \cite{Leskovec:2010:SNS:1753326.1753532} and represent the complementary (i.e. when a product e.g. a phone is purchased, the other product e.g. a phone charger is likely to be bought in addition) and substitutable (two products can be purchased instead of each other such as the phones from two competing brands) relationships respectively in a signed network of products \cite{McAuley:2015:INS:2783258.2783381,Chiang:2014:PCS:2627435.2638573,DBLP:journals/corr/TangCAL15}.

Whereas all types of links in most multi-layer networks such as physical contact and online friendships are mostly positive thus facilitate the spread of information, opinion and etc., the positive and negative links in a signed network usually play dramatically different roles in a general spreading process. Hence, we propose, in this work a Self-Avoiding Pruning (SAP) Walk on a signed network to model, e.g.  a user's purchase behaviour on a signed network of products.
\begin{figure}
\centering
    \includegraphics[width=3.5in]{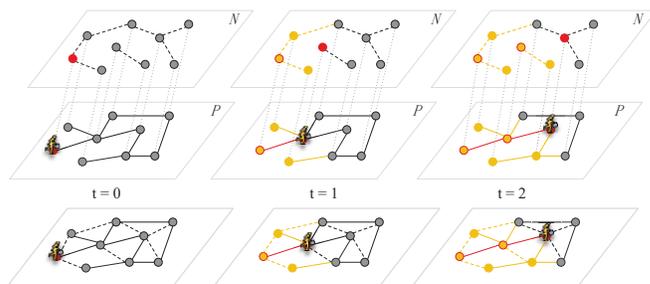}
\caption{Schematic plot of SAP Walk on a signed network. The signed network is represented on the top as a two-layer network£¬ a Negative layer and a Positive layer where dashed lines between the two layers emphasise that the nodes are the same individual across layers and represented at the bottom as a single network with two types of links: positive (solid lines) and negative (dotted lines). At $t=0$, a walker visits a random node in the network, which is in red. At step 1 or $t=1$, the walker moves to a random positive neighbour (in red) and the previously visited node and its negative neighbours (in yellow) are removed. Such steps repeat until the walker has no node to visit any more. The pruned network in grey is shrinking over time.}
\label{fig:sapwalk}
\end{figure}
As shown in Figure~\ref{fig:sapwalk}, a SAP walk starts at a random node in a signed network at $t=0$. At each step, the walker moves from its current location node $i$ to a positive neighbour\footnote {A positive (negative) neighbour of a node is a neighbour that is connected with this node via a positive (negative) link.} $j$ that is randomly selected and its previous location, i.e. node $i$ as well as node $i$'s negative neighbours are removed from the signed network\footnote{When a node is removed from the network, all links connected to the node, including both positive and negative links, are also removed.}. The walker repeats such steps until there is no new location to move to.  Since each node pair can be connected by either a positive or negative link, but not both, the walker could equivalently, at each step, removes firstly the negative neighbours of its current visiting node $i$, then moves to a random positive neighbour $j$ and afterwards removes the previously visited node $i$.

In the context of a signed product network, a SAP walk may model the purchase trajectory of a user on the network of products: initially, the user purchases a random product and afterwards buys a random complementary product of his/her previous purchase; however, the user will not buy the same product repetitively nor the substitutable products of what he/she has bought. When a user buys a product, the complementary products are recommended to a user in online shopping platforms like Amazon. The random purchase of a random complementary product has been assumed or modelled in the past (see Section \ref{relatedwork}) \cite{selfavoidwalk,rwproduct,masuda2017random}.

In the scenario of a signed social network, a SAP walk may model how an opinion/policy adopted (rejected) by supportive (antagonistic) partners. Consider the scenario of a signed social network, where the nodes could be either individuals or companies. A SAP walk could model the trajectory that a walker spreads/lobbies for his opinion, technology or business solutions: an individual/company tends to be easier to be convinced by the walker if one of his friends or collaborating companies has been reached by the walker, which motivates a walker to preach hop by hop through the positive network layer. However, an individual tends to be difficult to convince if any of his antagonistic partners has been reached by the walker, which motivates the walker to avoid of the negative neighbours of those that have been visited.

In this paper, we aim to understand how features of a signed network influence:  (1) the evolution of the pruned network topology resulted from the node removals in a SAP walk (2) the length/hopcount of a SAP walk, i.e. the number of positive links that a SAP walker traverses in total and (3) the visiting probability of each node by a SAP walk. Taking the product network as an example, we are actually going to explore how the initial signed product network features influence (1) the subnetwork a user may further explore after several purchase actions and (2) the probability that a product is purchased and (3) how many purchases a user may perform in total in a SAP walk. The signed product network thus may affect both the user purchase behaviour and the popularity or market share of the products. These questions will be firstly explored on simple signed network models and afterwards on two real-world signed networks.

Our observations and analysis point out the significant influence of the negative/substitutable links on the number of purchases of a user, the distribution of the popularity of a product etc. These findings may inspire the design of future recommender systems: which complementary product(s) should be recommended in order to maximize the total number of purchases of a walker? With which products a product should not compete in order to maximise the total purchases of the product? How competition between products may affect market share of the products and users' purchases?

The SAP model can be improved or extended, especially as more rich data becomes available. The complementary products of a product are not necessarily recommended with the same strength or priority. Section \ref{generalisation} illustrates one possible generalisation of the SAP model where a substitutable product of a purchased product is removed with a given probability.

The paper is organised as follows: we introduce the basic definitions related to signed network models and random walks in Sec. \ref{definition}. The influence of the signed network features on the aforementioned properties are studied in signed network models in Sec. \ref{prune}, Sec. \ref{length} and \ref{visitingprobability}, respectively. The influence of the correlation between the positive and negative degrees of a node on these SAP walk properties is discussion in Sec. \ref{degreecorrelation}. Our observations and understanding obtained in signed network models are further verified in two real-world signed networks in Sec \ref{realworld}. We summarise our findings and discuss promising future work is in Sec. \ref{conclusion}.

\section{Definitions}
\label{definition}
In this section, we introduce basic definitions regarding to signed network representation and models, different types of random walks\cite{masuda2017random} and their relation to the SAP walk.

\subsection{Signed Network Representation}
In a signed network with $N$ nodes, two $N \times N$ adjacency matrices $A^+$ and $A^-$
can be used to represent the positive and negative connections respectively. Element $A^+_{i,j}=1$ ($A^-_{i,j}=1$) if node $i$ and $j$ are connected via a positive (negative) link. Or else $A^+_{i,j}=0$ ($A^-_{i,j}=0$). A signed network can be as well represented by a single adjacency matrix $A=A^+-A^-$. Hence, each element
\begin{displaymath}
A_{i,j} =
\begin{cases}
1, \text {if $i$ and $j$ are connected via a positive link}  \\
-1, \text{if $i$ and $j$ are connected via a negative link} \\
0,  \text {if $i$ and $j$ are not connected}
\end{cases}
\end{displaymath}

The positive degree $d^+_{i}= \sum_{j=1}^{N}A^+_{i,j}$ of a node $i$ counts the number of positive neighbours of node $i$, whereas the negative degree $d^-_{i}= \sum_{j=1}^{N}A^-_{i,j}$ indicates the number of negative links incident to a node $i$. As shown in Section \ref{realworld}, both the positive and negative degree of a node in real-world signed networks tend to follow a power law distribution. Since each node pair can be connected only by one type of links, positive of negative, but not both, $A^-_{i,j}A^+_{i,j}=0$. An example of signed network is shown in Figure~\ref{fig:sapwalk}, which is plotted both as a two-layer network (above) and a single network with two types of links (bottom).

\subsection{Signed Network Models}

The simplest signed networks can be constructed by generating the positive layer and negative layer independently from the same network model or two different network models respectively, such as the Erd\H{o}s-R\'{e}nyi ER and scale-free SF random network model.

Erd\H{o}s-R\'{e}nyi ER random network is one of the most studied random network
models that allow many problems to be treated analytically
\cite{ERgraph1,ERrandom}. To generate an Erd\H{o}s-R\'{e}nyi random network
with $N$ nodes and average degree $E[D]$, we start with $N$ nodes and place
each link between two nodes that are chosen at random among the $N$
nodes until a total number $L=\frac{N * E[D]}{2}$ of links have been placed.
All the links are bidirectional. In this paper, we choose $N=1000$, $E[D]=4$ for
Erd\H{o}s-R\'{e}nyi random networks. Erd\H{o}s-R\'{e}nyi random networks are
characterised by a Poisson degree distribution, $Pr[D=k] = \frac{(Np)^{k}%
e^{-Np}}{k!}$, where $D$ is the degree of a random node in the network, and the link density $p=\frac{E[D]}{N-1}$.

\noindent We use the hidden parameter model \cite{PhysRevLett.89.258702,PhysRevE.66.066121,PhysRevE.68.036112,BAnetsci2016} to generate scale-free
networks which have a power-law degree distribution $Pr[D=k]=ck^{-\lambda}$
as observed in many real-world networks \cite{configmodel3,config1,BAnetwork}. The hidden parameter model is considered because the degree distribution and the average degree of the generated scale-free networks are both controllable. We start with $N$ isolated nodes and assign each node $i$ a hidden parameter $\eta_{i}=\frac{1}{i^{\alpha}}$, $i=1, 2, ..., N$. At each step, two nodes $i$ and $j$ are chosen randomly with a probability proportional to $\eta_{i}$ and $\eta_{j}$ and they are connected as a link if they were not connected previously. Such steps are repeated until $L=\frac{E[D]N}{2}$ links have been added. In this case, the generated random network has a power-law degree distribution $Pr[D=k]=ck^{-(1+\frac{1}{\alpha})}$. In this paper, we consider $N=1000$, average degree $E[D]=4$ and $\lambda=3$, such that the ER and SF networks have the same average degree and a size of the largest connect component close to $N$.

The positive and negative degree of a node are possibly correlated, actually positively correlated as shown in the real-world signed networks in Section \ref{realworld}. Moreover, triangles with an odd number of positive links, so called balanced triangles, have been shown to appear more frequently than the other types of signed triangles \cite{Szell03082010}.

We focus on the simplest signed networks where the positive and negative connections are generated independently from either the same or different network models, i.e. ER or SF model. In this case, the positive and negative degree of a node are uncorrelated. We construct 4 types of signed networks: ER-ER, ER-SF, SF-SF and SF-ER, where $N=1000$ and  the average degree of both layers are $4$. Moreover, we consider as well ER-ER networks where  $E[D^+]=4$ and $E[D^-]=0, 4$ and $8$ to explore the influence of the density of the negative layer on SAP walks.

We explored as well the signed networks ER-ER and SF-SF where the positive and negative degree of a node are positively correlated with linear correlation coefficient $0 \leq \rho \leq 1 $, $E[D^+] = E[D^-] = 4$ and $N=1000$. Such networks are generated as follows. First, an ER (or SF) network is generated as the positive network layer. Second, set the negative degree of each node the same as its positive degree. Third,  select randomly a fraction $1-\rho$ of the nodes and shuffle randomly their negative degrees. After the shuffling, the generated degree sequences for the two layers are correlated with linear correlation coefficient $\rho$ \cite{Shlomo_correlation, PhysRevE.90.052811}. Given the negative degree of each node, construct the negative network layer according to the configuration model \cite{Newmanconfiguration}.

\section{Related Work}
\label{relatedwork}
Classic random walk (RW) starts at a random node in an unsigned network. At each step, the walker moves from its current location node $i$ to a neighbour that is selected uniformly at random. In this process, the walker can visit any node repeatedly if the network is connected. Random walk has been widely applied e.g. to model network routing protocol, users' visit at websites via hyper links and to detect network topology \cite{7539632,5466694,6688590,6406279,7737565}.
The Self-Avoiding random walk (SAW) is the same as the random walk except that at each step the walker moves to a random neighbour that has not yet been visited. Hence, each node can be visited maximally once. A SAW stops when the walker has no further node to visit any more. The SAW was first introduced by chemist P. J. Froly to study the behaviour of polymers on lattice graph \cite{flory1953principles}. SAW has also been applied to detect protein-protein interaction\cite{7035059}, to detect network structure which is more efficient than classic random walk by avoiding previously visited nodes in each step \cite{2014-Camilleri-p4791-4791}, and to detect unidentified network traffic\cite{nia2016detecting}.

Performance of these two types of random walks have been analytically studied \cite{lawler1980}. The probability that a node is visited by a classic random walker has been shown to be proportional to the degree of that node. The path length of a SAW is the number of links that has been traversed in total in a SAW. The path length of SAW has been studied, especially regarding to the average and the probability distribution \cite{2012arXiv1212.3448G}. I. Tishby, O. Biham and E. Katzav have found that the path length of SAW on an Erd\H{o}s-R\'{e}nyi random network follows the Gompertz distribution in the tail \cite{1751-8121-49-28-285002}.

RW and SAW have been used to model users' purchase activity on a recommendation network of products where two products are connected if two products are frequently co-prochased by the same users \cite{selfavoidwalk,rwproduct}.
In contrast to RW and SAW, SAP walk addresses further that products can be substitutable to each other and are seldom or not purchased by the same user.

Jung et al. considered signed random walk, where the sign of the walker changes depending on the signs of the links that walker has traversed \cite{walksigned}. This work addresses, for the first time, that the signed links could influence the dynamics of the walk thus the walkers' trajectories. The SAP walk is more challenging to trace analytically compared to other random walk models that have been studied so far. It is equivalent to a self-avoiding walk on the positive network layer if the negative network layer is empty, i.e. no negative links exist.

Opinion diffusion (voter model) on a signed network has been proposed in \cite{diffusionsigned}. Dynamics of influence diffusion and influence maximisation problem on signed networks have been explored \cite{diffusionsigned} beyond the influence maximisation problem on single unsigned networks \cite{influencemaxi}. Viral spreading processes on signed networks have been studied in \cite{epi_signed}. In this paper, we address the self-avoiding random walks on Signed networks.  This is more challenging to address analytically because the earlier trajectory of a walker influences its future moves, in contrast to spreading models where state transitions of each node depends only on the current states of neighbours and the local dynamic rule.

\section{Evolution of the pruned network structure}

\label{prune}
The Self-Avoiding Pruning (SAP) Walk on a signed network is more complex than previous random walk models. At each step, the walker moves to a random positive neighbour, and afterwards, not only the previously visited node but also its negative neighbours are removed/pruned from the signed network. As shown in Figure ~\ref{fig:sapwalk}, the pruned signed network (in grey) $G(t)$ is shrinking over time. The pruned signed network at a step $t$ refers to the remaining signed network after the removal of nodes at step $t$. The initial signed network corresponds to $G(0)$.

The pruned positive network layer $G^{+}(t)$ suggests the potential subgraph of the original signed network that the walker could further explore via a SAP walk. In this section,  we will explore how the topology, especially the average degree, of the pruned positive network layer is changing over time. We start with the simpler case when both the initial positive and negative layers are ER networks, with possibly different average degree.

Firstly, we examine the case when the initial negative layer is an empty graph, i.e. the average degree $E[D^{-}(t=0) = 0]$ initially is zero. The initial positive network layer possesses the binomial degree distribution, which approaches the Poisson distribution $Pr[D^{+}(0)=k]=\frac{c(0)^k}{k!}e^{-c(0)}$, under the condition that the network is sparse, i.e. the average degree $c(0)=E[D^{+}(0)]$ of the initial positive layer at $t=0$ is a constant. The SAP walk on such a signed network is equivalent to a SAW on the positive network layer $G^{+}(0)$. Initially, the network has $N(t=0)=N$ nodes. At any step $t$, the pruned positive network layer has $N-t$ nodes.

An insightful observation of a SAW walk on an ER random graph in \cite{1751-8121-49-28-285002} is as follows. A SAW walker has a higher probability to visit a neighbour with a higher degree. Take the step $t=1$ as an example. Starting from the random node that is visited at step $t=0$, the walker walks to a node with degree $k$ in $G^{+}(0)$ at step $t=1$ with probability $kPr[D^{+}(0) = k]/c(0)$. A special property of the Poisson distribution is that $kPr[D^{+}(0) = k]/c(0)=Pr[D^{+}(0) = k-1]$. The probability to walk to a node with degree $k$ in $G^{+}(0)$, thus with degree $k-1$ in $G^{+}(1)$ after the removal of the previously visited node and its links is $Pr[D^{+}(0) = k-1]$. The node to visit at $t=1$ is as if chosen randomly from $G^{+}(1)$.

Note that when a randomly selected node together with its links are removed from an ER network, the remaining network is again an ER network with the same link density, i.e. the probability that two nodes are connected. Hence, the network pruning resulting from a SAP walk on an ER positive network with an empty negative layer is statistically equivalent to the node removal process upon the initial ER network where at each step, a node is randomly selected and this node together with its links are removed from the network. The pruned positive network  $G^{+}(t)$ at any step $t$ is thus an ER network with $N-t$ nodes and link density $E[D^{+}(0)]/(N-1)$. The average degree \footnote{The average degree of the pruned positive or negative network layer at a step $t$ refers to  the average degree at step $t$ over all the nodes and over a large number of SAP walk realisations.} of the pruned positive network at step $t$ is
\begin{equation}E[D^{+}(t)]=\frac{E[D^{+}(0)]}{N-1}(N-t-1)
\end{equation}
when the original signed network is sparse and the size $N$ is large.

Furthermore, we consider the case where the negative network layer $G^{-}(0)$ is not empty but an ER network with its average degree $E[D^{-}(0)]>0$.

\begin{thm} When the positive and negative layers of a signed network are large, sparse ER networks and independently generated with link density $p^{+}=E[D^{+}(0)]/(N-1)$ and $p^{-}=E[D^{-}(0)]/(N-1)$ respectively, the average degree $E[D^{+}(t)]$ of the pruned positive network layer at time $t$ of a SAP walk follows
\begin{multline}
\label{equ_ER_avgD}
E[D^{+}(t)] \approx (N-1+\frac{1}{p^{-}})p^{+}(1-p^{-}) ^{t}-\frac{p^{+}}{p^{-}}
\end{multline}
\end{thm}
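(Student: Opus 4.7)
My plan is to reduce the theorem to a linear recurrence for the expected size $n(t) := E[N(t)]$ of the pruned vertex set. Granted that each pruned layer $G^{+}(t)$, $G^{-}(t)$ remains (to leading order in large, sparse $N$) an Erd\H{o}s-R\'{e}nyi graph on $n(t)$ vertices with its original link density, the conclusion $E[D^{+}(t)] \approx (n(t)-1)p^{+}$ is immediate, and the negative degree of the walker's most recent position has expectation $(n(t-1)-1)p^{-}$ by the same approximation (since the walker is selected through positive edges, which are independent of negative degrees). With these two ingredients, the formula is essentially forced.

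To execute the recurrence, observe that at step $t$ the SAP rule removes the previously visited vertex together with its current negative neighbours, so
\begin{equation*}
n(t) = n(t-1) - 1 - (n(t-1)-1)p^{-} = (n(t-1)-1)(1-p^{-}), \quad n(0)=N.
\end{equation*}
The unique fixed point of this affine map is $N^{*} = 1 - 1/p^{-}$, and the shifted sequence $u(t) = n(t) - N^{*}$ satisfies the homogeneous recurrence $u(t)=(1-p^{-})\,u(t-1)$, giving
\begin{equation*}
n(t) = \Bigl(N - 1 + \tfrac{1}{p^{-}}\Bigr)(1-p^{-})^{t} + 1 - \tfrac{1}{p^{-}}.
\end{equation*}
Substituting into $E[D^{+}(t)] = (n(t)-1)p^{+}$ yields exactly $\bigl(N-1+\tfrac{1}{p^{-}}\bigr)p^{+}(1-p^{-})^{t} - p^{+}/p^{-}$, as claimed. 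As a sanity check, letting $p^{-}\to 0$ recovers the excerpt's formula $E[D^{+}(t)] = p^{+}(N-t-1)$ via the first-order expansion $(1-p^{-})^{t} \approx 1 - tp^{-}$.

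The main obstacle is the structural claim that both $G^{+}(t)$ and $G^{-}(t)$ remain approximately ER with their initial link densities. For $p^{-}=0$ the excerpt already handles this through the Poisson self-similarity identity $k\,Pr[D^{+}(0){=}k]/c(0) = Pr[D^{+}(0){=}k-1]$, which makes the walker's position look, after the removal of its links, like a uniformly random surviving vertex. For $p^{-}>0$ there are two deletion mechanisms at each step. The walker's own vertex contributes a positive-degree-biased deletion that I would neutralise by the same Poisson identity applied to the (independent) positive layer; its negative neighbours, by contrast, form a subset defined entirely through the negative layer, and by independence of the two layers this subset is uniformly random relative to $G^{+}$, so the positive link density $p^{+}$ is preserved on the surviving vertices. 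A symmetric argument with positive and negative exchanged gives the claim for $G^{-}(t)$. I would then need to check that the $O(1/N)$ corrections from these mean-field approximations, and the concentration of $N(t)$ about $n(t)$, are negligible on the scales relevant to the theorem, which they are in the sparse-large-$N$ regime underlying the $\approx$ in the statement.
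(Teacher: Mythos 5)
Your proposal is correct and follows essentially the same route as the paper: the recurrence $N(t)=(N(t-1)-1)(1-p^{-})$ for the pruned size, justified by the independence of the two layers (so the visited node and its negative neighbours act as uniformly random deletions from the other layer), followed by $E[D^{+}(t)]=(N(t)-1)p^{+}$. Your explicit fixed-point solution of the affine recurrence, the $p^{-}\to 0$ sanity check, and the more careful discussion of why both layers remain approximately ER are welcome additions, but they do not change the argument.
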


\begin{proof}At any step $t$ of a SAP walk, network is pruned by the removal of the previously visited node and its negative neighbours. Since the negative and positive ER networks are generated independently, the node to visit at each step is as if chosen randomly from the negative layer $G^{-}(t)$. From the view of the negative network layer, a random node and its negative neighbours together with all their negative links are removed from the negative layer at each step. The negative network layer $G^{-}(t)$ remains approximately an ER network with the same link density over time. This is an approximation because the neighbour of a random node tends to have a higher degree. The link density remains approximately the same  $p^{-}=E[D^{-}(0)]/(N-1)$ over time, whereas at each step $t$ on average, $N(t-1)-N(t)=1+(N(t-1)-1)p^{-}$ nodes are removed. The size $N(t)$ refers to the average size of the pruned network at step $t$ over a large number realisations of the stochastic SAP walks. Hence, the size of the pruned negative layer, which is as well the size of the pruned positive layer follows
\begin{equation}
N(t)\approx (N-1+\frac{1}{p^{-}})(1-p^{-}) ^{t}-\frac{1}{p^{-}}+1
\end{equation}
From the prospective of the positive layer, at step $t$, the negative neighbours of the previously visited node are as if chosen randomly from positive layer $G^{+}(t-1)$. The positive layer remains as an ER network with the same link density $p^{+}=E[D^{+}(0)]/(N-1)$. The average degree of the pruned positive network layer at time $t$ is $E[D^{+}(t)]=(N(t)-1)p^{+}$, which leads to \ref{equ_ER_avgD}.
\end{proof}

When the original signed network is sparse and the size $N$ is large,
\begin{equation}
E[D^{+}(t)]\simeq(N+\frac{1}{p^{-}})p^{+}(1-p^{-}) ^{t}-\frac{p^{+}}{p^{-}}
\end{equation}
 \begin{figure}
  \centering
    \includegraphics[width=3.5 in]{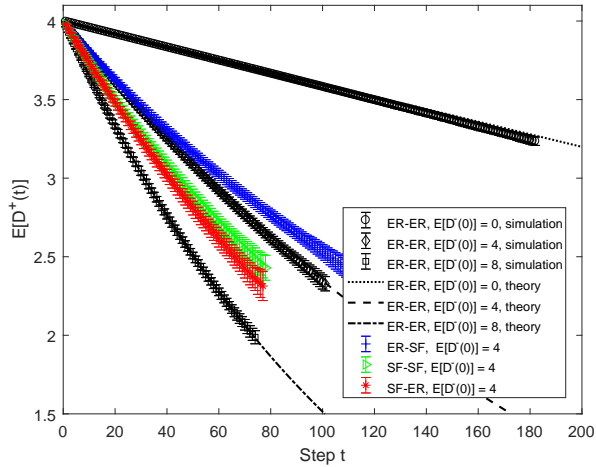}
  \caption{The average degree $E[D^{+}(t)]$ of the positive pruned network as a function of the step $t$ of a SAP walk on a signed network with $N=1000$ nodes. We simulate 100 independent realisations of a SAP walk on each of the independently generated 100 signed networks and plot the average and standard deviation (error bar) of $E[D^{+}(t)]$ . The average degree of the initial positive layer is always $E[D^{+}(0)]=4$. The SF network has an power exponent 3. }
  \label{fig:avgD} %% label for entire figure
\end{figure}

As shown in Figure~\ref{fig:avgD}, the average degree $E[D^{+}(t)]$ of the positive pruned network as a function of the SAP walk step can be well approximated by our theoretical result Equation (\ref{equ_ER_avgD}), when the initial network is an ER-ER signed network. The average positive degree $E[D^{+}(t)]$ of the pruned network at each step $t$ is proportional to the size of the pruned network as shown in Equation (\ref{equ_ER_avgD}). A dense negative layer with a large $E[D^{-}(0)]$, could more effectively prune the network, leading to a fast decrease of the average degree of the pruned positive layer.

ER(positive)-SF(negative) signed networks are pruned less than ER-ER networks when both layers have the same average degree $4$ (see Figure~\ref{fig:avgD}). This can be explained as follows. If a visited node has a large negative degree, its removal will lead to the removal of many nodes, its negative neighbours. If a negative neighbour of a visited node has a large negative degree, however, the removal of such a negative neighbour together with its negative links will not remove extra nodes but makes the negative layer sparser, protecting the network from the pruning. In ER-SF networks, nodes with a high negative degree in the SF negative layer are likely to be removed as the negative neighbour of a visited node, which reduces the pruning.
Figure~\ref{fig:avgD} shows that the pruned positive network, e.g. $E[D^{+}(t)]$, shrinks faster if the initial network is a SF-SF signed network than ER-SF signed network. This is mainly due to the fact that, a node with a large positive degree is likely to be visited in early steps and removed, significantly reducing the average degree of the positive pruned layer. However, the negative neighbours of a visited node are as if chosen randomly in the positive layer and tend to have a lower positive degree in a SF-SF network than in ER-SF network, slightly reducing the pruning effect.

Hence, a SF or in general a heterogeneous positive layer and a dense negative layer tend to facilitate the pruning of the network whereas a SF (heterogeneous) negative layer reduces the pruning effect.

\section{Length of a SAP walk}
\label{length}
The length or hopcount $H$ of a SAP walk counts the total number of positive links, or the total number of move steps, a SAP walker traverses until it has no other node to move to. In the context of a signed produce network, $H+1$ suggests the total number of purchases of a consumer. Signed networks leading to a large $H+1$ promotes the purchases of more products. We would like to understand how the original signed network topology influences the length of a SAP walk.
\begin{figure}
  \centering
    \includegraphics[width=3.5in]{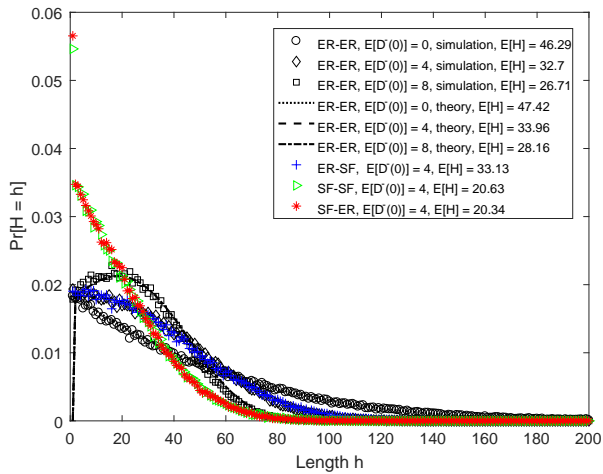}
  \caption{Probability distribution of the length of a SAP walk on a signed network.}
  \label{fig:pathlength} %% label for entire figure
\end{figure}

The probability distribution of the length $H$ of a SAP walk is shown in Figure~\ref{fig:pathlength}, for various types of signed networks. Intuitively, a SAP walk stops when the walker has no other node to move to, which is likely to happen if the current pruned positive network layer is less connected in the sense that no giant connected component exists but only small connected clusters exist. Hence, a dense initial negative network layer $G^{-}(0)$ leads to the removal of many nodes in each step and effectively reduces the connectivity of the positive layer, resulting in a small length $H$. This explains our observation in ER-ER signed network (Figure ~\ref{fig:pathlength}) that a dense $G^{-}(0)$ leads to a small length $H$ on average. %In Figure~\ref{fig:pathlength}, we also find that the length $H$ is smaller on average in a SF-SF network than that in an ER-ER network, when both networks have the same average degree in both layers. Indeed, a heterogeneous initial negative layer (e.g. a SF network) has a high chance that a node with a high positive degree is visited by a SAP in early steps and removed, effectively reducing the connectivity of the pruned positive network as well as the length $H$ of a SAP walk. %

The distribution of the length $H$ of a SAP walk on an ER-ER network can be  analytically derived. A SAP walk has a length $H=h$ requires that the node that the walker visits at step $h$ has degree $0$ in the pruned positive layer $G^{+}(h)$ and each node visited in a previous step $t$ where $0\leq t<h$, has a positive degree in the corresponding pruned positive network layer larger than $0$. As discussed in Section \ref{length}, the pruned positive layer remains an ER network with the same link density $p^+$ but with a shrinking size $N(t)$. Hence,
\begin{equation}\label{eq_length}
Pr[H=h]= (1-p^+)^{N(h)-1}\prod\limits_{t=0}^{h-1}[1-(1-p^+)^{N(t)-1}].
\end{equation}

Figure~\ref{fig:pathlength} shows that our theory of the distribution of the length Equation~\ref{eq_length} well approximates the simulation results.

Which type of signed networks tend to lead to a long length of a SAP walk? If we look at the average path length, the ordering of signed networks from the highest to the lowest follows: $ER-ER(E[D^{-}(0)=0])$, $ER-SF$, $ER-ER(E[D^{-}(0)=4])$,$ER-ER(E[D^{-}(0)=8])$,$SF-SF$,$SF-ER(E[D^{-}(0)=4])$. This ordering is consistent with our previous explanation: a heterogeneous positive layer such as SF network and a dense negative layer facilitate the pruning of the network leading to a short length of a SAP walk whereas a heterogeneous e.g. SF negative layer reduces the pruning effect attributing to a long length of a SAP walk.

The length of a SAP walk actually depends on not only the link density but as well the connectivity of the pruned positive layer. The negative neighbours of a visited node are as if chosen randomly in the positive layer since the two layers are independent in connections. Removal of such random nodes in the positive layer reduces less the connectivity of the positive layer if the original positive layer is a SF network since SF networks are robust against random node removals compared to ER networks. However, hubs in the positive layer are more likely to be visited and removed reducing more significantly the density of the SF positive layer.

\section{Nodal Visiting Probability}

\label{visitingprobability}
The probability that a node is visited by a SAP walk implies a certain kind of importance of the node, e.g. the probability that a product is purchased when the signed network represents the network of products. Intuitively, a node with a higher positive degree in the initial signed network has a higher chance to be visited by a SAP walk. Hence, we examine the visiting probability of a node given its initial positive degree, which is shown in Figure ~\ref{fig:visitprob}.

\begin{figure}
  \centering
\subfigure[]{
    \includegraphics[width=3.5in]{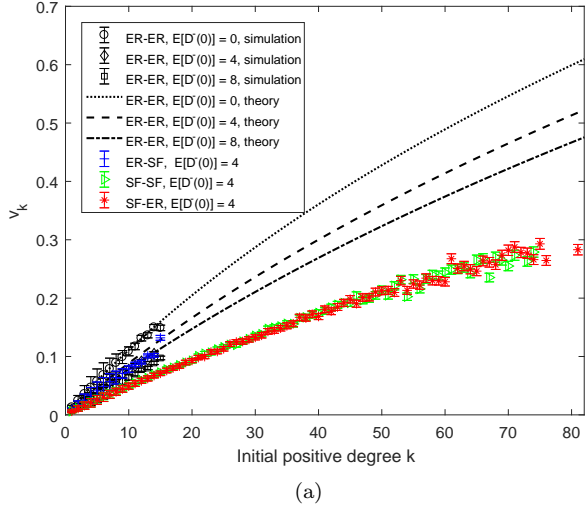}}
\hspace{1in}
\subfigure[]{
    \includegraphics[width=3.5in]{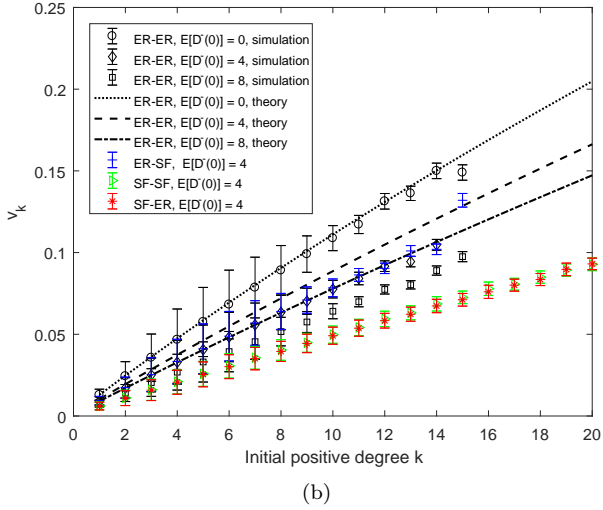}}
  \caption{The probability $v_k$ that a node is visited by a SAP walk v.s. its degree $D^+(0)=k$ in the initial positive network is plotted for all possible degree values $k$ in (a) and for small $k$ in (b). Since many nodes may have the same initial positive degree $D^+(0)=k$, we plot their average visiting probability and the standard deviation.}
  \label{fig:visitprob} %% label for entire figure
\end{figure}

Firstly, we analytically derive the nodal visiting probability in ER-ER networks. Specifically, we compute the probability $v_k$ that a random node $j$ with degree $d^+_j(0)=k$ in the initial positive layer is visited by a SAP  walk starting at a random node. We denote $X_t$ as the node that is visited by a SAP walk at step $t$. Since the node $j$ can be visited at any step $0\leq t\leq h$ by a SAP walk of length $h$, we have
\begin{equation}
\begin{aligned}
v_k=\sum\limits_{h=0}^{N(0)-1} \sum\limits_{t=0}^{h}Pr[X_t=j|H=h]Pr[H=h] \\
=\sum\limits_{t=0}^{N(0)-1}Pr[X_t=j|H\geq t]Pr[H\geq t],
\end{aligned}
\end{equation}
assuming that the probability node $j$ is visited at any step $t$ is independent of the length $H$ of the walk as long as $H\geq t$. The node $j$ is visited at step $t$ by a SAP walk that has a length $H\geq t$ requires that node $j$ is not visited nor removed in the previous steps and $j$ is connected with the node $X_{t-1}$ visited at step $t-1$. Hence,
\begin{equation}
\begin{aligned}
Pr[X_t=j|H\geq t]=\prod\limits_{t'=0}^{t-1}(1-Pr[X_{t'}=j]) \cdot  \\
\frac{\Delta_j^+(t-1)}{N(t-1)-1} \cdot \frac{1}{p^+(N(t-1)-1)},
\end{aligned}
\end{equation}
where $Pr[X_0=j] = 1/N(0)$, $\Delta_j^+(t-1)$ is the degree of node $j$ at step $t-1$ in the pruned positive network layer given that $j$ is not visited in the first $t-1$ steps. The node $X_t$ to be visited at step $t$ as well as its negative neighbours are as if randomly chosen from the pruned positive layer $G_{t-1}^+$. The pruned network remains proximately (precisely if the negative layer is empty) an ER network with the same link density $p^+$ when the node visited and its negative neighbours are removed at each step. The ratio $\frac{\Delta_j^+(t-1)}{N(t-1)-1}$ is the probability that node $j$ is connected with the node $X_{t-1}$ visited in the previous step and $\frac{1}{p^+(N(t-1)-1)}$ is the probability that the walker choose node $j$ out of the $p^+(N(t-1)-1)$ positive neighors of $X_{t-1}$ to move to. We approximate the degree $\Delta_j^+(t')$ by its average using the same symbol, which follows the following recursion for $t'<t$ thus before the node is visited
\begin{equation}
\begin{aligned}
\Delta_j^+(t') =& \frac{\Delta_j^+(t'-1)}{N(t'-1)-1} (\Delta_j^+(t'-1)-p^-(\Delta_j^+(t'-1)-1)-1 )       \\
&+ (1-\frac{\Delta_j^+(t'-1)}{N(t'-1)-1}) (\Delta_j^+(t'-1)-p^-\Delta_j^+(t'-1))\\
=& \Delta_j^+(t'-1)(1-p^-)\frac{N(t'-1)-2}{N(t'-1)-1}
\label{equ:degreekrecur}
\end{aligned}
\end{equation}
where $\Delta_j^+(0)=D_j^+(0) = k$. The first (second) term corresponds to the case that node $j$ is (not) connected with the node visited at step $t'-1$. In the first case where $j$ is connected with $X_{t'-1}$, the degree $\Delta_j^+(t')$ at step $t'$ could be reduced from $\Delta_j^+(t'-1)$ due to the removal of $X_{t'-1}$ and its negative neighbours which happen to be a positive neighbour of node $j$.  In the second case, the degree $\Delta_j^+(t')$ decreases from $\Delta_j^+(t'-1)$ due to the removal of $X_{t'-1}$'s negative neighbours which happen to be a positive neighbour of node $j$. Combining Equation (2) and (5)-(8), we could derive the probability $v_k$ that a random node $j$ with degree $d^+_j(0)=k$ in the initial positive layer is visited by a SAP walk on an ER-ER signed network.

As shown in Figure ~\ref{fig:visitprob}, our numerical solution of nodal visiting probability well approximates the simulation results especially when the initial negative ER network is sparse e.g. $E[D^{-}(0)]=0$. When the initial negative ER network is denser, the actual visiting probability is lower than predication of the numerical solution. This is because our theoretical analysis assumes that the negative layer remains an ER network with the same link density after the removal of each visited node and its negative neighbours, as if all these nodes removed are chosen randomly. In fact, high negative degree nodes are more likely to be removed as a negative neighbour of a visited node. The actual $\Delta_j^+(t')$, thus also the visiting probability, is smaller than their corresponding analytical estimations.

Interestingly, the order of the various types of signed networks in the heterogeneity of nodal visiting probabilities, i.e. the slope, from the highest to the lowest, the order of these networks in the average SAP walk length and the order of the signed networks in the average degree of the pruned positive layer at a given step are the same. A SAP walk that prunes the network slowly tends to have a long length and lead to a high heterogeneity in nodal visiting probabilities. A SAP walker tends to visit high degree nodes in the pruned positive layer at each step. A longer length of a SAP walk, thus, attributes to a higher visit probability of a node with a large initial positive degree, leading to more heterogeneity of nodal visiting probabilities.
\section{Influence of degree-degree correlation}
\label{degreecorrelation}
The positive and negative degree of a node can be correlated. The real-world networks considered in Section \ref{realworld} have a positive correlation between the degrees of a node in the two layers. In this section, we explore how the degree-degree correlation between the positive and negative layers may influence the aforementioned performance of a SAP walk.

We consider ER-ER and SF-SF signed networks with $E[D^+] = E[D^-] = 4$ and $N=1000$, where the degree-degree correlation $\rho$ varies within $[0,1]$. We illustrate the three properties of a SAP walk using the two extreme case $\rho=0$ and $\rho=1$, whereas results from other $\rho$ values within $[0,1]$ lead to the same observations. We simulate 100 independent realisations of a SAP walk on each of the independently generated 100 signed networks to derive the three properties of SAP walks.

In both ER-ER and SF-SF signed networks, we find that a positive degree-degree correlation evidently facilitates the pruning of the network, reduces the average path length and leads to a more homogeneous visiting probabilities among the nodes (see Figure ~\ref{fig:avgD_corr}, ~\ref{fig:pathlength_corr}, and ~\ref{fig:visitingprob_corr} ). Such effects are more evident in SF-SF networks than in ER-ER networks and can be explained as follows. When the degree-degree correlation is positive, a high degree node in the positive layer tends to have a high degree in the negative layer. The high positive degree of such a node tends to let the walker visit the node in earlier steps. After being visited, the node together with its many negative neighbours, are removed, pruning the network significantly. The high negative degree of such a node tends to let the node be removed as a negative neighbor of a node that has been visited. The removal of such a node together with its many positive links significantly prunes the positive layer and reduces the connectivity.
\begin{figure}
  \centering
    \includegraphics[width=3.5 in]{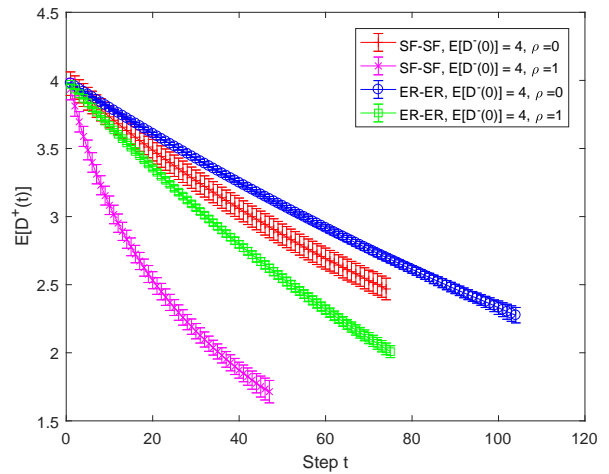}
  \caption{The average degree $E[D^{+}(t)]$ of the positive pruned network as a function of the step $t$ of a SAP walk on a signed network when the positive and negative degrees of a node are uncorrelated $\rho=0$ and correlated $\rho=1$ respectively.}
  \label{fig:avgD_corr} %% label for entire figure
\end{figure}
\begin{figure}
  \centering
    \includegraphics[width=3.5in]{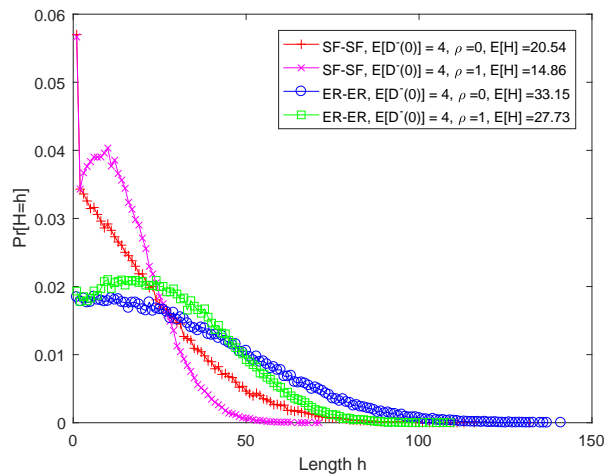}
  \caption{Probability distribution of the length of a SAP walk on a signed network when the positive and negative degrees of a node are uncorrelated $\rho=0$ and correlated $\rho=1$ respectively.}
  \label{fig:pathlength_corr} %% label for entire figure
\end{figure}
\begin{figure}
  \centering
    \includegraphics[width=3.5in]{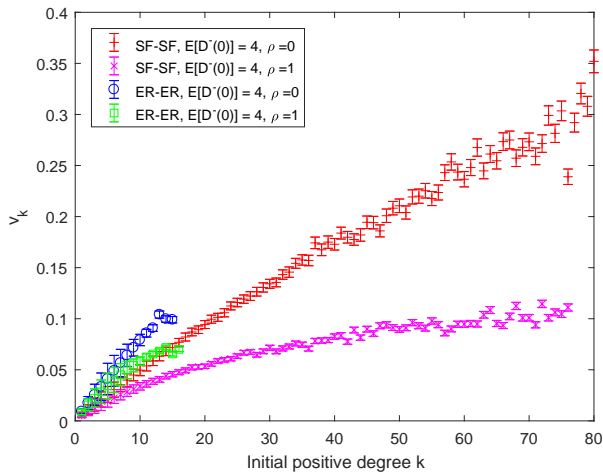}
  \caption{The probability $v_k$ that a node is visited by a SAP walk v.s. its degree $D^+(0)=k$ in the initial positive network, when the positive and negative degrees of a node are uncorrelated $\rho=0$ and correlated $\rho=1$ respectively.}
  \label{fig:visitingprob_corr} %% label for entire figure
\end{figure}

\section{Generalisation of the SAP walk model}
\label{generalisation}

The SAP walk model can be generalised from multiple perspectives to better approximate real-world purchase behavior of users. As more rich data recording user online activities becomes available, we may discover the possibly heterogenous preference of a user over the recommended products. The positive neighbours of a node that has just been visited may have different probabilities to be selected by the walker to move to.

It is possible, though small in chance, that a user buys two substitutable products. The negative neighbours of a node that have just been visited are not necessarily removed in reality, but could be removed with a given probability or with different probabilities. We illustrate one extension of the SAP walk: once a node is visited, each of its negative neighbours is removed independently with a pruning probability $r$. We first consider the ER-ER and SF-SF networks where the positive and negative layers are generated independently.
\begin{figure}
  \centering
    \includegraphics[width=3.5in]{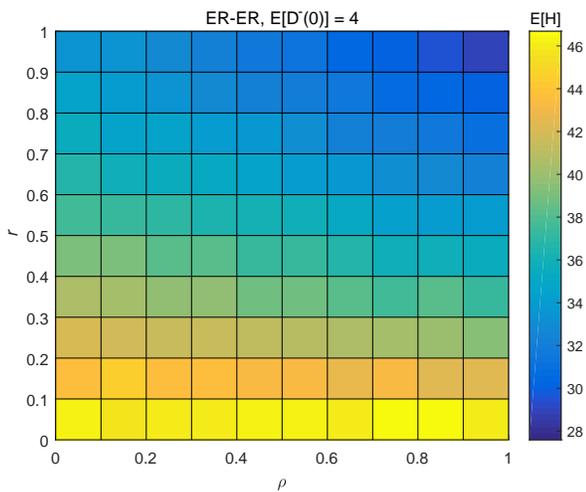}
  \caption{Phase diagram of the average path length $E[H]$ of generalised SAP walks. Dependence of the average path length on the pruning probability $r$ and the degree-degree correlation $\rho$. The color bar in the right represents the average path length. The underlying signed network is an ER-ER network with $E[D^{+}(0)]=E[D^{-}(0)]=4$. For each set of parameters, the average path length is obtained as the average of 1000 SAP walks on each of the 100 signed networks.}
  \label{fig:hop_r_corr_ER} %% label for entire figure
\end{figure}
\begin{figure}
  \centering
    \includegraphics[width=3.5in]{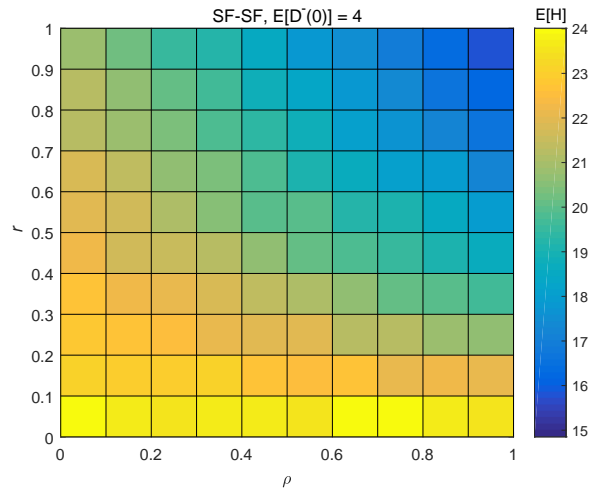}
  \caption{Phase diagram of the average path length $E[H]$ of generalised SAP walks. Dependence of the average path length on the pruning probability $r$ and the degree-degree correlation $\rho$. The color bar in the right represents the average path length. The underlying signed network is an SF-SF network with $E[D^{+}(0)]=E[D^{-}(0)]=4$. For each set of parameters, the average path length is obtained as the average of 1000 SAP walks on each of the 100 signed networks.}
  \label{fig:hop_r_corr_SF} %% label for entire figure
\end{figure}
The extended SAP walk with a pruning probability $r$ on an ER-ER signed network with an average degree $E[D^+]$ and $E[D^-]$ for the two layers respectively is equivalent to the SAP walk with pruning probability $1$ on an ER-ER signed network with an average degree $E[D^+]$ and $E[D^-]*r$ respectively. Adding a pruning probability $r$ to a SAP walk model is equivalent to scaling the link density $p^{-}=\frac{[D^-]}{N-1}$ of the negative layer to $p^{-}*r$.

However, such equivalence does not hold when the positive and negative degrees of a node are correlated. We take the average path length of a generalised SAP walk as an example and explore the effect of the pruning probability $p$ and the degree-degree correlation $\rho$ on the average path length of a generalised SAP walk. As shown in Figure~\ref{fig:hop_r_corr_ER} and ~\ref{fig:hop_r_corr_SF}, the effect of the pruning probability on the average hopcount is more evident as the degree-degree correlation increases. When the degree-degree correlation is high, nodes with a high degree in both layers tend to be removed in early steps of a walk. In this case, a smaller pruning probability could effectively reduce the pruning.

\section{SAP walks on real-world signed networks}
\label{realworld}
Finally, we choose two real-world signed networks and explore their network features and how these features may influence the SAP walks on these networks.
We consider the Wikipedia adminship election network and an Extracted Epinions social network \cite{snapnets}. In Wiki network, two nodes connected by a positive (negative) link suggest that the two users support (reject) each other to be an administrator. A positive (negative) link in Epinions network means that the corresponding two users trust (distrust) each other¡¯s reviews.

The Epinions network is far larger than Wiki. We have sampled the Epinions network by firstly removing all nodes with zero positive degree or zero negative degree and then randomly selecting the same number of nodes as in Wiki from the largest connected positive layer of Epinions together with the positive and negative links among these nodes. Basic topological features of these two networks of the same size are showed in Table. 1.  The degree correlation $\rho _D$ measures the linear correlation coefficient between the positive degree and negative degree of a node. The positive and negative layers tend to be positively correlated in their degrees, i.e. $\rho _D>0$, instead of independent as assumed in our signed network models.

\begin{table*}
\begin{center}
\begin{tabular}{ |c|c|c|c|c|c|c|c| }
\hline
Network& Nodes & Links & fraction of "+"Links & fraction of "-"Links& E[$D_+$] & E[$D_-$] & $\rho _D$\\
\hline
Wiki & 6186& 97874 & 78402 (80.11\%) & 19472 (19.89\%) & 25.35 & 6.30 & 0.62 \\
\hline
Epinions & 6186 & 92091 & 73062 (79.33\%) & 19029 (20.67\%) & 23.62 & 6.15 & 0.35 \\
\hline
\end{tabular}
\end{center}
\caption{Topological Parameters of Real Networks}\label{table:1}
\end{table*}
The degree distributions of the positive and negative layer in both Wiki and Epinions are shown in Figure ~\ref{fig:degree_real} are highly heterogeneous, closer to a scale-free distribution than a Poisson distribution.
\begin{figure}
  \centering
  \subfigure[Wiki network]{
    \includegraphics[width=3.0in]{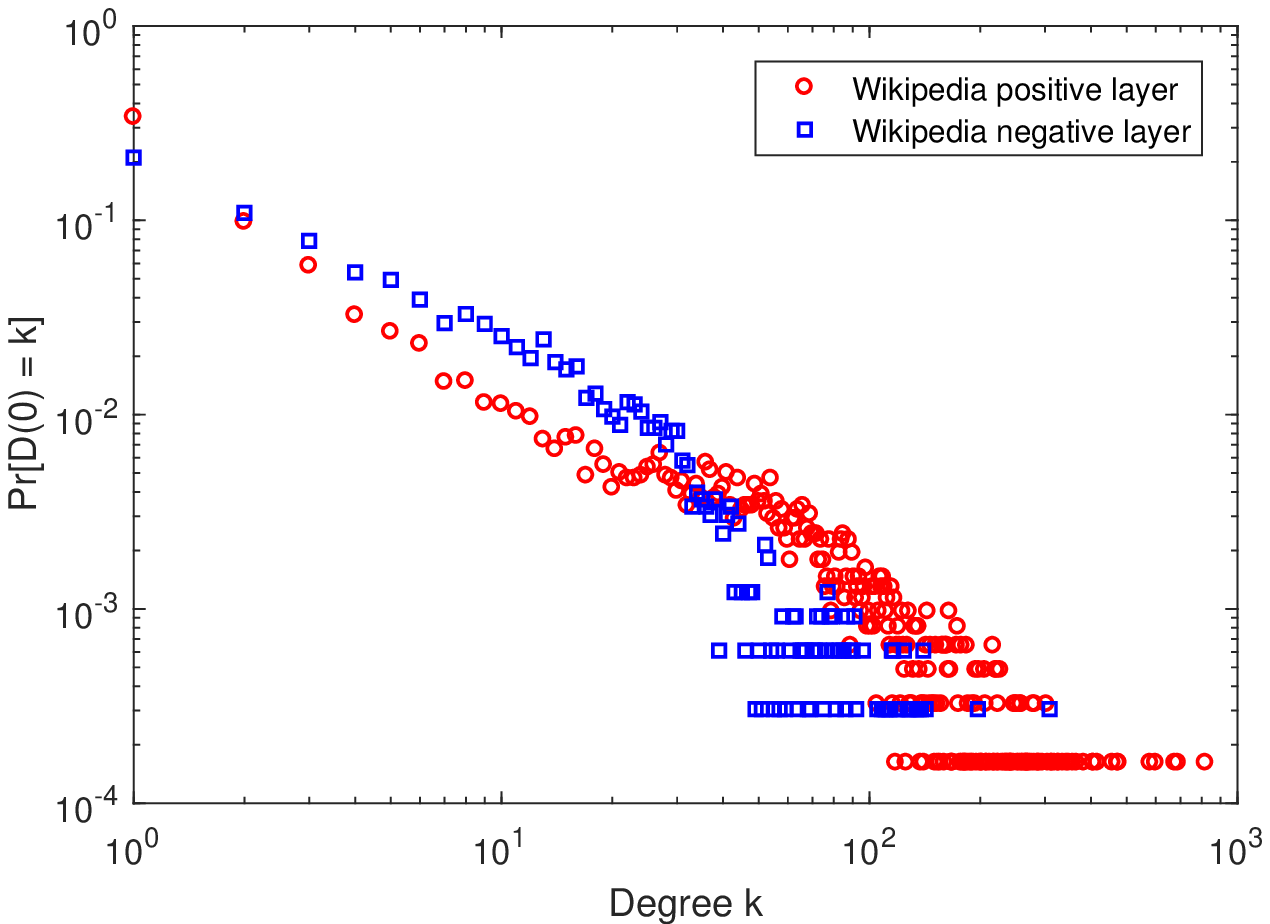}}
  \hspace{0.2in}
    \subfigure[Epinions network]{
    \label{fig:subfig:c} %% label for second subfigure
    \includegraphics[width=3.0in]{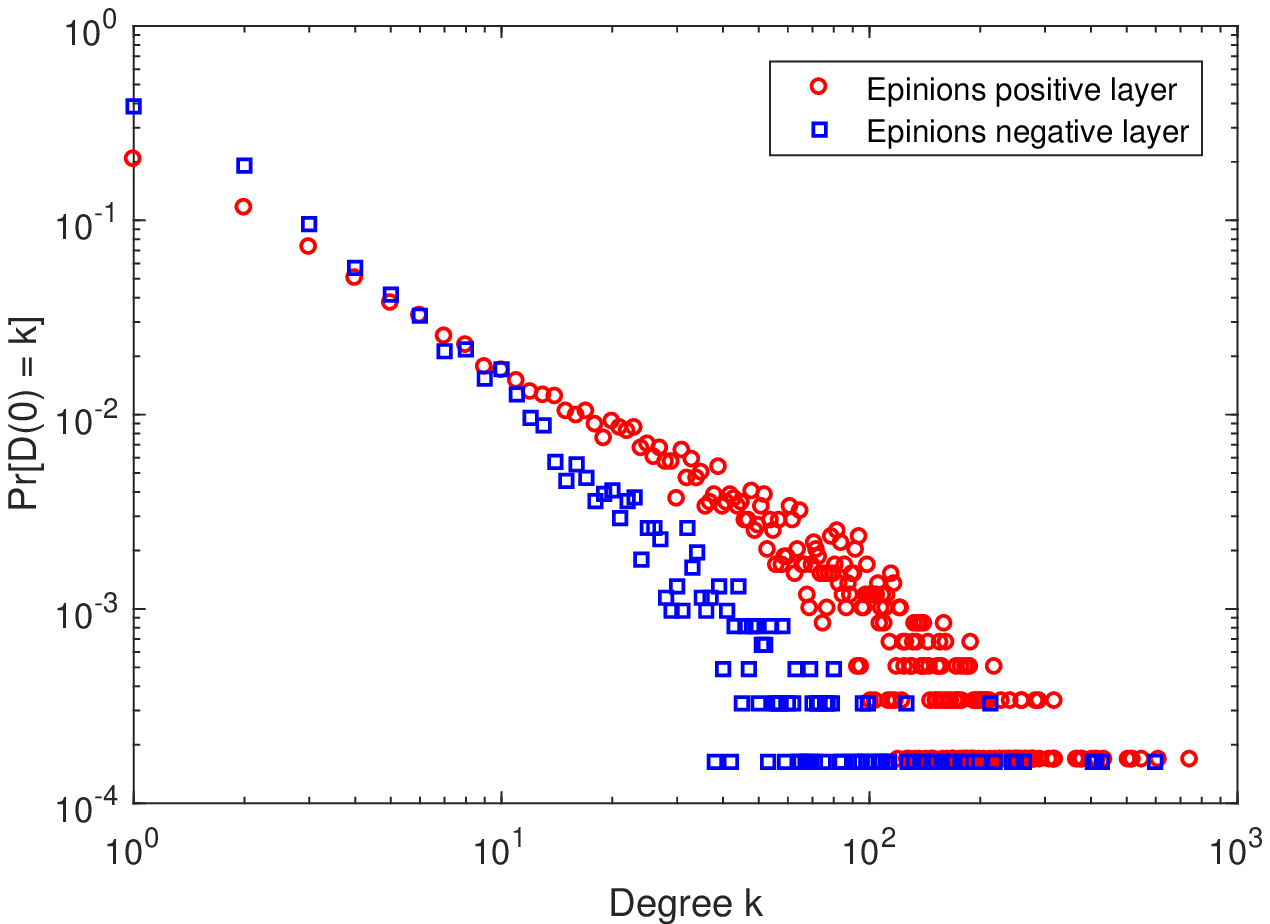}}
  \caption{Degree distribution of real-world signed networks.}
  \label{fig:degree_real} %% label for entire figure
\end{figure}

Upon each real-world signed network, we simulate independently $10^{5}$ SAP walks and investigate their key properties discussed earlier. The negative network layer has been missing in modelling the purchase behavior of a user. Hence, we consider as well the SAP walk on these two real-world networks, where, however, the negative network layer is replaced as an empty network without any link.

\begin{figure}
  \centering

    \includegraphics[width=3.5in]{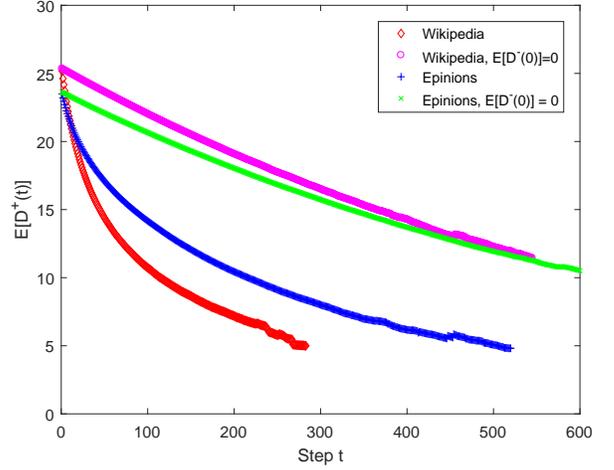}

  \caption{The average degree $E[D^{+}(t)]$ of the positive pruned network as a function of the step $t$ of a SAP walk on a real-world signed network.}
  \label{fig:avgD_real} %% label for entire figure
\end{figure}
Figure ~\ref{fig:avgD_real} shows that the positive layer is pruned or shrinks faster in Wiki than in Epinions network. Wiki has a shorter length of SAP walk on average than Epinions, as shown in Figure ~\ref{fig:length_real}. One explanation for both observations is that Wiki has a slightly denser initial negative layer (larger $E[D^{-}(0)]$) than Epinions as shown in Table 1, which removes on average more nodes per step. Moreover, the high degree correlation $\rho_D$ in Wiki contributes as well to a fast pruning in the positive layer and a short length of SAP walks.

When the negative layer is empty, i.e. $E[D^{-}(0)=0]$, the positive layer is pruned far slower, the average path length $E[H]$ is far larger. A SAP walk on a signed network with an empty negative layer is equivalent to a self-avoidance walk on the positive network layer. In this case, the Epinion positive layer leads to a longer average path length than the Wiki positive layer. This is likely due to the higher standard deviation of the degree in the Wiki positive layer $49.55$ than that in the Epinion $45.58$ positive layer. A higher degree standard deviation implies an earlier visit of the hubs, whose removal may significantly prune the network and reduce the connectivity.

\begin{figure}
  \centering

    \includegraphics[width=3.5in]{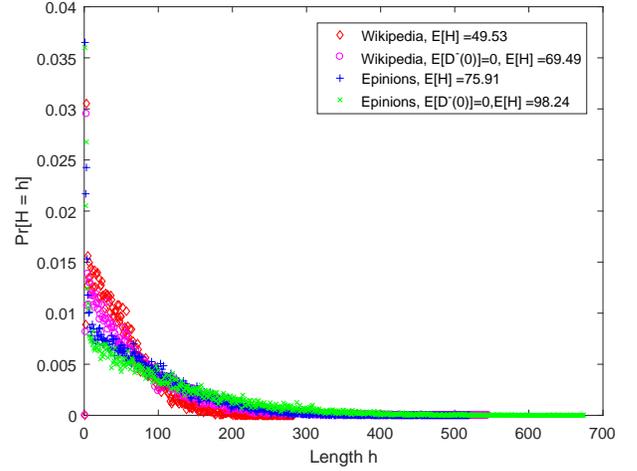}

  \caption{Probability distribution of the length of a SAP walk on a signed real-world network.}
  \label{fig:length_real} %% label for entire figure
\end{figure}

\begin{figure}
  \centering

    \includegraphics[width=3.5in]{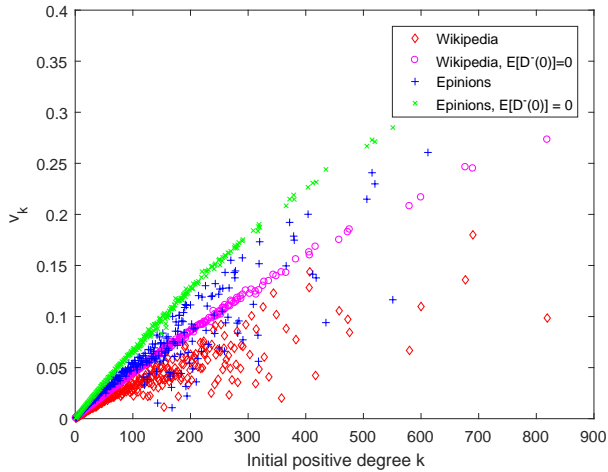}

  \caption{The probability $v_k$ that a node is visited by a SAP walk v.s. its degree $D^+(0)=k$ in the initial positive network.}
  \label{fig:visitprb_real} %% label for entire figure
\end{figure}

The visiting probability of a node versus its initial positive degree tends to have a larger slope in Epinions than that in Wiki. This is consistent with our observations in signed network models that the visiting probability $v_k$ increases faster with $k$ in a signed network that leads to a higher average degree $E[D^{+}(t)]$ of the pruned positive layer and a longer length $E[H]$ of SAP walks.

The negative network layer dramatically prunes a signed network and reduces length of a SAP walk. Moreover, a heterogeneous degree distribution in the positive layer and a positive degree-degree correlation between positive and negative layers may further enhance the pruning effect, shorten the SAP walk length and facilitate homogeneous visiting probabilities of nodes. These effects have been observed consistently in both network models and real-world networks.

\section{CONCLUSION}
\label{conclusion}
Classic spreading models assume that all network links are beneficial for information diffusion. However, the positive and negative links in a signed network may facilitate and prevent the contagion of information, opinion and behavior etc. respectively. As a start, we propose a Self-Avoiding Pruning (SAP) Random Walk on a signed network to model, for example, a user's purchase activity on a signed network of products and information/opinion diffusion on a signed social network. We unravel the significant effect of the negative links and the signed network structure in general on SAP walks. We found that a more heterogeneous degree distribution of the positive network layer such as the power-law distribution, a denser negative layer and a high degree-degree correlation between the two layers tend to prune the network faster, suppress the length of SAP walks and reduce the heterogeneity in nodal visiting probabilities. When the two layers are independent, however, a more heterogeneous degree distribution of the negative network layer tends to slow down the pruning and contribute to a longer length of SAP walks and more heterogeneity in nodal visiting probabilities. These observations has been obtained from both signed network models and real-world signed network and analytically proved in signed ER-ER networks. Real-world networks tend to have a heterogeneous degree distribution in the positive layer and a positive degree-degree correlation, which reduce total purchases of users but increase the homogeneity of the popularity of products. Our findings point out the possibility to influence users' purchases and product popularity via recommendations and competitions.

Beyond the basic features of signed networks considered in this work, it is interesting to explore further, how other key features such as the fraction of balanced triangles, which appear more frequently than unbalanced ones in real-world networks, affect SAP walks and other dynamic processes in general. We could as well to improve the SAP walk towards a more realistic model of e.g. user's purchase activity, by taking into account, for example, the choice of the initial node to visit, the possibility that a walker/user may stop the walk earlier and the heterogeneity of the links preference over recommendations. Optimisation problems that are interesting to be further explored include how to add nodes to an existing signed network and how to add positive links via e.g. recommendations to maximise visiting probabilities of a group of nodes.

\section*{Acknowledgment}
The authors would like to thank National Nature Science Foundation of China (Nos. 11601430, 11631014, 11471193), the Foundation for Distinguished Young
Scholars of Shandong Province (JQ201501) for support.

\bibliography{reference}

%merlin.mbs apsrev4-1.bst 2010-07-25 4.21a (PWD, AO, DPC) hacked
%Control: key (0)
%Control: author (8) initials jnrlst
%Control: editor formatted (1) identically to author
%Control: production of article title (-1) disabled
%Control: page (0) single
%Control: year (1) truncated
%Control: production of eprint (0) enabled
\begin{thebibliography}{59}%
\makeatletter
\providecommand \@ifxundefined [1]{%
 \@ifx{#1\undefined}
}%
\providecommand \@ifnum [1]{%
 \ifnum #1\expandafter \@firstoftwo
 \else \expandafter \@secondoftwo
 \fi
}%
\providecommand \@ifx [1]{%
 \ifx #1\expandafter \@firstoftwo
 \else \expandafter \@secondoftwo
 \fi
}%
\providecommand \natexlab [1]{#1}%
\providecommand \enquote  [1]{``#1''}%
\providecommand \bibnamefont  [1]{#1}%
\providecommand \bibfnamefont [1]{#1}%
\providecommand \citenamefont [1]{#1}%
\providecommand \href@noop [0]{\@secondoftwo}%
\providecommand \href [0]{\begingroup \@sanitize@url \@href}%
\providecommand \@href[1]{\@@startlink{#1}\@@href}%
\providecommand \@@href[1]{\endgroup#1\@@endlink}%
\providecommand \@sanitize@url [0]{\catcode `\\12\catcode `\$12\catcode
  `\&12\catcode `\#12\catcode `\^12\catcode `\_12\catcode `\%12\relax}%
\providecommand \@@startlink[1]{}%
\providecommand \@@endlink[0]{}%
\providecommand \url  [0]{\begingroup\@sanitize@url \@url }%
\providecommand \@url [1]{\endgroup\@href {#1}{\urlprefix }}%
\providecommand \urlprefix  [0]{URL }%
\providecommand \Eprint [0]{\href }%
\providecommand \doibase [0]{http://dx.doi.org/}%
\providecommand \selectlanguage [0]{\@gobble}%
\providecommand \bibinfo  [0]{\@secondoftwo}%
\providecommand \bibfield  [0]{\@secondoftwo}%
\providecommand \translation [1]{[#1]}%
\providecommand \BibitemOpen [0]{}%
\providecommand \bibitemStop [0]{}%
\providecommand \bibitemNoStop [0]{.\EOS\space}%
\providecommand \EOS [0]{\spacefactor3000\relax}%
\providecommand \BibitemShut  [1]{\csname bibitem#1\endcsname}%
\let\auto@bib@innerbib\@empty
%</preamble>
\bibitem [{\citenamefont {Buldyrev}\ \emph {et~al.}(2010)\citenamefont
  {Buldyrev}, \citenamefont {Parshani}, \citenamefont {Paul}, \citenamefont
  {Stanley},\ and\ \citenamefont {Havlin}}]{Interdependent_Nature2010}%
  \BibitemOpen
  \bibfield  {author} {\bibinfo {author} {\bibfnamefont {S.~V.}\ \bibnamefont
  {Buldyrev}}, \bibinfo {author} {\bibfnamefont {R.}~\bibnamefont {Parshani}},
  \bibinfo {author} {\bibfnamefont {G.}~\bibnamefont {Paul}}, \bibinfo {author}
  {\bibfnamefont {H.~E.}\ \bibnamefont {Stanley}}, \ and\ \bibinfo {author}
  {\bibfnamefont {S.}~\bibnamefont {Havlin}},\ }\href
  {http://dx.doi.org/10.1038/nature08932} {\bibfield  {journal} {\bibinfo
  {journal} {Nature}\ }\textbf {\bibinfo {volume} {464}},\ \bibinfo {pages}
  {1025} (\bibinfo {year} {2010})}\BibitemShut {NoStop}%
\bibitem [{\citenamefont {Kivel{\"a}}\ \emph {et~al.}(2014)\citenamefont
  {Kivel{\"a}}, \citenamefont {Arenas}, \citenamefont {Barthelemy},
  \citenamefont {Gleeson}, \citenamefont {Moreno},\ and\ \citenamefont
  {Porter}}]{kivela2014multilayer}%
  \BibitemOpen
  \bibfield  {author} {\bibinfo {author} {\bibfnamefont {M.}~\bibnamefont
  {Kivel{\"a}}}, \bibinfo {author} {\bibfnamefont {A.}~\bibnamefont {Arenas}},
  \bibinfo {author} {\bibfnamefont {M.}~\bibnamefont {Barthelemy}}, \bibinfo
  {author} {\bibfnamefont {J.~P.}\ \bibnamefont {Gleeson}}, \bibinfo {author}
  {\bibfnamefont {Y.}~\bibnamefont {Moreno}}, \ and\ \bibinfo {author}
  {\bibfnamefont {M.~A.}\ \bibnamefont {Porter}},\ }\href@noop {} {\bibfield
  {journal} {\bibinfo  {journal} {Journal of complex networks}\ }\textbf
  {\bibinfo {volume} {2}},\ \bibinfo {pages} {203} (\bibinfo {year}
  {2014})}\BibitemShut {NoStop}%
\bibitem [{\citenamefont {De~Domenico}\ \emph {et~al.}(2013)\citenamefont
  {De~Domenico}, \citenamefont {Sol\'e-Ribalta}, \citenamefont {Cozzo},
  \citenamefont {Kivel\"a}, \citenamefont {Moreno}, \citenamefont {Porter},
  \citenamefont {G\'omez},\ and\ \citenamefont
  {Arenas}}]{PhysRevX_mathMultilayer}%
  \BibitemOpen
  \bibfield  {author} {\bibinfo {author} {\bibfnamefont {M.}~\bibnamefont
  {De~Domenico}}, \bibinfo {author} {\bibfnamefont {A.}~\bibnamefont
  {Sol\'e-Ribalta}}, \bibinfo {author} {\bibfnamefont {E.}~\bibnamefont
  {Cozzo}}, \bibinfo {author} {\bibfnamefont {M.}~\bibnamefont {Kivel\"a}},
  \bibinfo {author} {\bibfnamefont {Y.}~\bibnamefont {Moreno}}, \bibinfo
  {author} {\bibfnamefont {M.~A.}\ \bibnamefont {Porter}}, \bibinfo {author}
  {\bibfnamefont {S.}~\bibnamefont {G\'omez}}, \ and\ \bibinfo {author}
  {\bibfnamefont {A.}~\bibnamefont {Arenas}},\ }\href {\doibase
  10.1103/PhysRevX.3.041022} {\bibfield  {journal} {\bibinfo  {journal} {Phys.
  Rev. X}\ }\textbf {\bibinfo {volume} {3}},\ \bibinfo {pages} {041022}
  (\bibinfo {year} {2013})}\BibitemShut {NoStop}%
\bibitem [{\citenamefont {Sahneh}\ \emph
  {et~al.}(2013{\natexlab{a}})\citenamefont {Sahneh}, \citenamefont {Scoglio},\
  and\ \citenamefont {Van~Mieghem}}]{sahneh2013generalized}%
  \BibitemOpen
  \bibfield  {author} {\bibinfo {author} {\bibfnamefont {F.~D.}\ \bibnamefont
  {Sahneh}}, \bibinfo {author} {\bibfnamefont {C.}~\bibnamefont {Scoglio}}, \
  and\ \bibinfo {author} {\bibfnamefont {P.}~\bibnamefont {Van~Mieghem}},\
  }\href@noop {} {\bibfield  {journal} {\bibinfo  {journal} {IEEE/ACM
  Transactions on Networking (TON)}\ }\textbf {\bibinfo {volume} {21}},\
  \bibinfo {pages} {1609} (\bibinfo {year} {2013}{\natexlab{a}})}\BibitemShut
  {NoStop}%
\bibitem [{\citenamefont {Cardillo}\ \emph {et~al.}(2013)\citenamefont
  {Cardillo}, \citenamefont {G\'{o}mez-Garde\~{n}es}, \citenamefont {Zanin},
  \citenamefont {Romance}, \citenamefont {Papo}, \citenamefont {Pozo},\ and\
  \citenamefont {Boccaletti}}]{Cardillo2013}%
  \BibitemOpen
  \bibfield  {author} {\bibinfo {author} {\bibfnamefont {A.}~\bibnamefont
  {Cardillo}}, \bibinfo {author} {\bibfnamefont {J.}~\bibnamefont
  {G\'{o}mez-Garde\~{n}es}}, \bibinfo {author} {\bibfnamefont {M.}~\bibnamefont
  {Zanin}}, \bibinfo {author} {\bibfnamefont {M.}~\bibnamefont {Romance}},
  \bibinfo {author} {\bibfnamefont {D.}~\bibnamefont {Papo}}, \bibinfo {author}
  {\bibfnamefont {F.}~\bibnamefont {Pozo}}, \ and\ \bibinfo {author}
  {\bibfnamefont {S.}~\bibnamefont {Boccaletti}},\ }\href {\doibase
  10.1038/srep01344} {\bibfield  {journal} {\bibinfo  {journal} {Scientific
  Reports}\ }\textbf {\bibinfo {volume} {3}} (\bibinfo {year} {2013}),\
  10.1038/srep01344}\BibitemShut {NoStop}%
\bibitem [{\citenamefont {De~Domenico}\ \emph {et~al.}(2018)\citenamefont
  {De~Domenico}, \citenamefont {Granell}, \citenamefont {Porter},\ and\
  \citenamefont {Arenas}}]{de2018author}%
  \BibitemOpen
  \bibfield  {author} {\bibinfo {author} {\bibfnamefont {M.}~\bibnamefont
  {De~Domenico}}, \bibinfo {author} {\bibfnamefont {C.}~\bibnamefont
  {Granell}}, \bibinfo {author} {\bibfnamefont {M.~A.}\ \bibnamefont {Porter}},
  \ and\ \bibinfo {author} {\bibfnamefont {A.}~\bibnamefont {Arenas}},\
  }\href@noop {} {\bibfield  {journal} {\bibinfo  {journal} {Nature Physics}\
  }\textbf {\bibinfo {volume} {14}},\ \bibinfo {pages} {523} (\bibinfo {year}
  {2018})}\BibitemShut {NoStop}%
\bibitem [{\citenamefont {De~Domenico}\ \emph {et~al.}(2016)\citenamefont
  {De~Domenico}, \citenamefont {Granell}, \citenamefont {Porter},\ and\
  \citenamefont {Arenas}}]{de2016physics}%
  \BibitemOpen
  \bibfield  {author} {\bibinfo {author} {\bibfnamefont {M.}~\bibnamefont
  {De~Domenico}}, \bibinfo {author} {\bibfnamefont {C.}~\bibnamefont
  {Granell}}, \bibinfo {author} {\bibfnamefont {M.~A.}\ \bibnamefont {Porter}},
  \ and\ \bibinfo {author} {\bibfnamefont {A.}~\bibnamefont {Arenas}},\
  }\href@noop {} {\bibfield  {journal} {\bibinfo  {journal} {Nature Physics}\
  }\textbf {\bibinfo {volume} {12}},\ \bibinfo {pages} {901} (\bibinfo {year}
  {2016})}\BibitemShut {NoStop}%
\bibitem [{\citenamefont {Sol{\'e}-Ribalta}\ \emph {et~al.}(2016)\citenamefont
  {Sol{\'e}-Ribalta}, \citenamefont {De~Domenico}, \citenamefont {G{\'o}mez},\
  and\ \citenamefont {Arenas}}]{sole2016random}%
  \BibitemOpen
  \bibfield  {author} {\bibinfo {author} {\bibfnamefont {A.}~\bibnamefont
  {Sol{\'e}-Ribalta}}, \bibinfo {author} {\bibfnamefont {M.}~\bibnamefont
  {De~Domenico}}, \bibinfo {author} {\bibfnamefont {S.}~\bibnamefont
  {G{\'o}mez}}, \ and\ \bibinfo {author} {\bibfnamefont {A.}~\bibnamefont
  {Arenas}},\ }\href@noop {} {\bibfield  {journal} {\bibinfo  {journal}
  {Physica D: Nonlinear Phenomena}\ }\textbf {\bibinfo {volume} {323}},\
  \bibinfo {pages} {73} (\bibinfo {year} {2016})}\BibitemShut {NoStop}%
\bibitem [{\citenamefont {Saumell-Mendiola}\ \emph {et~al.}(2012)\citenamefont
  {Saumell-Mendiola}, \citenamefont {Serrano},\ and\ \citenamefont
  {Bogu\~n\'a}}]{PhysRevE.86.026106}%
  \BibitemOpen
  \bibfield  {author} {\bibinfo {author} {\bibfnamefont {A.}~\bibnamefont
  {Saumell-Mendiola}}, \bibinfo {author} {\bibfnamefont {M.~A.}\ \bibnamefont
  {Serrano}}, \ and\ \bibinfo {author} {\bibfnamefont {M.}~\bibnamefont
  {Bogu\~n\'a}},\ }\href {\doibase 10.1103/PhysRevE.86.026106} {\bibfield
  {journal} {\bibinfo  {journal} {Phys. Rev. E}\ }\textbf {\bibinfo {volume}
  {86}},\ \bibinfo {pages} {026106} (\bibinfo {year} {2012})}\BibitemShut
  {NoStop}%
\bibitem [{\citenamefont {Sahneh}\ \emph
  {et~al.}(2013{\natexlab{b}})\citenamefont {Sahneh}, \citenamefont {Scoglio},\
  and\ \citenamefont {Chowdhury}}]{6580178}%
  \BibitemOpen
  \bibfield  {author} {\bibinfo {author} {\bibfnamefont {F.~D.}\ \bibnamefont
  {Sahneh}}, \bibinfo {author} {\bibfnamefont {C.}~\bibnamefont {Scoglio}}, \
  and\ \bibinfo {author} {\bibfnamefont {F.~N.}\ \bibnamefont {Chowdhury}},\
  }in\ \href {\doibase 10.1109/ACC.2013.6580178} {\emph {\bibinfo {booktitle}
  {2013 American Control Conference}}}\ (\bibinfo {year} {2013})\ pp.\ \bibinfo
  {pages} {2307--2312}\BibitemShut {NoStop}%
\bibitem [{\citenamefont {Wang}\ \emph {et~al.}(2013)\citenamefont {Wang},
  \citenamefont {Li}, \citenamefont {D}, \citenamefont {Havlin}, \citenamefont
  {Stanley},\ and\ \citenamefont {Van~Mieghem}}]{wang2013effect}%
  \BibitemOpen
  \bibfield  {author} {\bibinfo {author} {\bibfnamefont {H.}~\bibnamefont
  {Wang}}, \bibinfo {author} {\bibfnamefont {Q.}~\bibnamefont {Li}}, \bibinfo
  {author} {\bibfnamefont {G.}~\bibnamefont {D}, \bibfnamefont {Agostino}},
  \bibinfo {author} {\bibfnamefont {S.}~\bibnamefont {Havlin}}, \bibinfo
  {author} {\bibfnamefont {H.~E.}\ \bibnamefont {Stanley}}, \ and\ \bibinfo
  {author} {\bibfnamefont {P.}~\bibnamefont {Van~Mieghem}},\ }\href@noop {}
  {\bibfield  {journal} {\bibinfo  {journal} {Physical Review E}\ }\textbf
  {\bibinfo {volume} {88}},\ \bibinfo {pages} {022801} (\bibinfo {year}
  {2013})}\BibitemShut {NoStop}%
\bibitem [{\citenamefont {Cozzo}\ \emph {et~al.}(2013)\citenamefont {Cozzo},
  \citenamefont {Banos}, \citenamefont {Meloni},\ and\ \citenamefont
  {Moreno}}]{cozzo2013contact}%
  \BibitemOpen
  \bibfield  {author} {\bibinfo {author} {\bibfnamefont {E.}~\bibnamefont
  {Cozzo}}, \bibinfo {author} {\bibfnamefont {R.~A.}\ \bibnamefont {Banos}},
  \bibinfo {author} {\bibfnamefont {S.}~\bibnamefont {Meloni}}, \ and\ \bibinfo
  {author} {\bibfnamefont {Y.}~\bibnamefont {Moreno}},\ }\href@noop {}
  {\bibfield  {journal} {\bibinfo  {journal} {Physical Review E}\ }\textbf
  {\bibinfo {volume} {88}},\ \bibinfo {pages} {050801} (\bibinfo {year}
  {2013})}\BibitemShut {NoStop}%
\bibitem [{\citenamefont {Liu}\ \emph {et~al.}(2015)\citenamefont {Liu},
  \citenamefont {Li}, \citenamefont {Qin}, \citenamefont {Liu}, \citenamefont
  {Wang},\ and\ \citenamefont {Wang}}]{DaqingLi}%
  \BibitemOpen
  \bibfield  {author} {\bibinfo {author} {\bibfnamefont {M.}~\bibnamefont
  {Liu}}, \bibinfo {author} {\bibfnamefont {D.}~\bibnamefont {Li}}, \bibinfo
  {author} {\bibfnamefont {P.}~\bibnamefont {Qin}}, \bibinfo {author}
  {\bibfnamefont {C.}~\bibnamefont {Liu}}, \bibinfo {author} {\bibfnamefont
  {H.}~\bibnamefont {Wang}}, \ and\ \bibinfo {author} {\bibfnamefont
  {F.}~\bibnamefont {Wang}},\ }\href@noop {} {\bibfield  {journal} {\bibinfo
  {journal} {PloS one}\ }\textbf {\bibinfo {volume} {10}},\ \bibinfo {pages}
  {e0120701} (\bibinfo {year} {2015})}\BibitemShut {NoStop}%
\bibitem [{\citenamefont {Li}\ \emph {et~al.}(2014)\citenamefont {Li},
  \citenamefont {Qin}, \citenamefont {Wang}, \citenamefont {Liu},\ and\
  \citenamefont {Jiang}}]{li2014epidemics}%
  \BibitemOpen
  \bibfield  {author} {\bibinfo {author} {\bibfnamefont {D.}~\bibnamefont
  {Li}}, \bibinfo {author} {\bibfnamefont {P.}~\bibnamefont {Qin}}, \bibinfo
  {author} {\bibfnamefont {H.}~\bibnamefont {Wang}}, \bibinfo {author}
  {\bibfnamefont {C.}~\bibnamefont {Liu}}, \ and\ \bibinfo {author}
  {\bibfnamefont {Y.}~\bibnamefont {Jiang}},\ }\href@noop {} {\bibfield
  {journal} {\bibinfo  {journal} {EPL (Europhysics Letters)}\ }\textbf
  {\bibinfo {volume} {105}},\ \bibinfo {pages} {68004} (\bibinfo {year}
  {2014})}\BibitemShut {NoStop}%
\bibitem [{\citenamefont {Granell}\ \emph {et~al.}(2014)\citenamefont
  {Granell}, \citenamefont {G{\'o}mez},\ and\ \citenamefont
  {Arenas}}]{competingprocess}%
  \BibitemOpen
  \bibfield  {author} {\bibinfo {author} {\bibfnamefont {C.}~\bibnamefont
  {Granell}}, \bibinfo {author} {\bibfnamefont {S.}~\bibnamefont {G{\'o}mez}},
  \ and\ \bibinfo {author} {\bibfnamefont {A.}~\bibnamefont {Arenas}},\
  }\href@noop {} {\bibfield  {journal} {\bibinfo  {journal} {Physical Review
  E}\ }\textbf {\bibinfo {volume} {90}},\ \bibinfo {pages} {012808} (\bibinfo
  {year} {2014})}\BibitemShut {NoStop}%
\bibitem [{\citenamefont {Granell}\ \emph {et~al.}(2013)\citenamefont
  {Granell}, \citenamefont {G{\'o}mez},\ and\ \citenamefont
  {Arenas}}]{dynamicinterplay}%
  \BibitemOpen
  \bibfield  {author} {\bibinfo {author} {\bibfnamefont {C.}~\bibnamefont
  {Granell}}, \bibinfo {author} {\bibfnamefont {S.}~\bibnamefont {G{\'o}mez}},
  \ and\ \bibinfo {author} {\bibfnamefont {A.}~\bibnamefont {Arenas}},\
  }\href@noop {} {\bibfield  {journal} {\bibinfo  {journal} {Physical review
  letters}\ }\textbf {\bibinfo {volume} {111}},\ \bibinfo {pages} {128701}
  (\bibinfo {year} {2013})}\BibitemShut {NoStop}%
\bibitem [{\citenamefont {Li}\ \emph {et~al.}(2013{\natexlab{a}})\citenamefont
  {Li}, \citenamefont {Braunstein}, \citenamefont {Wang}, \citenamefont {Shao},
  \citenamefont {Stanley},\ and\ \citenamefont {Havlin}}]{Li2013}%
  \BibitemOpen
  \bibfield  {author} {\bibinfo {author} {\bibfnamefont {Q.}~\bibnamefont
  {Li}}, \bibinfo {author} {\bibfnamefont {L.~A.}\ \bibnamefont {Braunstein}},
  \bibinfo {author} {\bibfnamefont {H.}~\bibnamefont {Wang}}, \bibinfo {author}
  {\bibfnamefont {J.}~\bibnamefont {Shao}}, \bibinfo {author} {\bibfnamefont
  {H.~E.}\ \bibnamefont {Stanley}}, \ and\ \bibinfo {author} {\bibfnamefont
  {S.}~\bibnamefont {Havlin}},\ }\href {\doibase 10.1007/s10955-012-0625-4}
  {\bibfield  {journal} {\bibinfo  {journal} {Journal of Statistical Physics}\
  }\textbf {\bibinfo {volume} {151}},\ \bibinfo {pages} {92} (\bibinfo {year}
  {2013}{\natexlab{a}})}\BibitemShut {NoStop}%
\bibitem [{\citenamefont {de~Arruda}\ \emph {et~al.}(2017)\citenamefont
  {de~Arruda}, \citenamefont {Cozzo}, \citenamefont {Peixoto}, \citenamefont
  {Rodrigues},\ and\ \citenamefont {Moreno}}]{de2017disease}%
  \BibitemOpen
  \bibfield  {author} {\bibinfo {author} {\bibfnamefont {G.~F.}\ \bibnamefont
  {de~Arruda}}, \bibinfo {author} {\bibfnamefont {E.}~\bibnamefont {Cozzo}},
  \bibinfo {author} {\bibfnamefont {T.~P.}\ \bibnamefont {Peixoto}}, \bibinfo
  {author} {\bibfnamefont {F.~A.}\ \bibnamefont {Rodrigues}}, \ and\ \bibinfo
  {author} {\bibfnamefont {Y.}~\bibnamefont {Moreno}},\ }\href@noop {}
  {\bibfield  {journal} {\bibinfo  {journal} {Physical Review X}\ }\textbf
  {\bibinfo {volume} {7}},\ \bibinfo {pages} {011014} (\bibinfo {year}
  {2017})}\BibitemShut {NoStop}%
\bibitem [{\citenamefont {Wang}\ \emph {et~al.}(2017)\citenamefont {Wang},
  \citenamefont {Chen}, \citenamefont {Qu}, \citenamefont {Li},\ and\
  \citenamefont {Havlin}}]{1367-2630-19-7-073039}%
  \BibitemOpen
  \bibfield  {author} {\bibinfo {author} {\bibfnamefont {H.}~\bibnamefont
  {Wang}}, \bibinfo {author} {\bibfnamefont {C.}~\bibnamefont {Chen}}, \bibinfo
  {author} {\bibfnamefont {B.}~\bibnamefont {Qu}}, \bibinfo {author}
  {\bibfnamefont {D.}~\bibnamefont {Li}}, \ and\ \bibinfo {author}
  {\bibfnamefont {S.}~\bibnamefont {Havlin}},\ }\href
  {http://stacks.iop.org/1367-2630/19/i=7/a=073039} {\bibfield  {journal}
  {\bibinfo  {journal} {New Journal of Physics}\ }\textbf {\bibinfo {volume}
  {19}},\ \bibinfo {pages} {073039} (\bibinfo {year} {2017})}\BibitemShut
  {NoStop}%
\bibitem [{\citenamefont {Leskovec}\ \emph {et~al.}(2010)\citenamefont
  {Leskovec}, \citenamefont {Huttenlocher},\ and\ \citenamefont
  {Kleinberg}}]{Leskovec:2010:SNS:1753326.1753532}%
  \BibitemOpen
  \bibfield  {author} {\bibinfo {author} {\bibfnamefont {J.}~\bibnamefont
  {Leskovec}}, \bibinfo {author} {\bibfnamefont {D.}~\bibnamefont
  {Huttenlocher}}, \ and\ \bibinfo {author} {\bibfnamefont {J.}~\bibnamefont
  {Kleinberg}},\ }in\ \href {\doibase 10.1145/1753326.1753532} {\emph {\bibinfo
  {booktitle} {Proceedings of the SIGCHI Conference on Human Factors in
  Computing Systems}}},\ \bibinfo {series and number} {CHI '10}\ (\bibinfo
  {publisher} {ACM},\ \bibinfo {address} {New York, NY, USA},\ \bibinfo {year}
  {2010})\ pp.\ \bibinfo {pages} {1361--1370}\BibitemShut {NoStop}%
\bibitem [{\citenamefont {McAuley}\ \emph {et~al.}(2015)\citenamefont
  {McAuley}, \citenamefont {Pandey},\ and\ \citenamefont
  {Leskovec}}]{McAuley:2015:INS:2783258.2783381}%
  \BibitemOpen
  \bibfield  {author} {\bibinfo {author} {\bibfnamefont {J.}~\bibnamefont
  {McAuley}}, \bibinfo {author} {\bibfnamefont {R.}~\bibnamefont {Pandey}}, \
  and\ \bibinfo {author} {\bibfnamefont {J.}~\bibnamefont {Leskovec}},\ }in\
  \href {\doibase 10.1145/2783258.2783381} {\emph {\bibinfo {booktitle}
  {Proceedings of the 21th ACM SIGKDD International Conference on Knowledge
  Discovery and Data Mining}}},\ \bibinfo {series and number} {KDD '15}\
  (\bibinfo  {publisher} {ACM},\ \bibinfo {address} {New York, NY, USA},\
  \bibinfo {year} {2015})\ pp.\ \bibinfo {pages} {785--794}\BibitemShut
  {NoStop}%
\bibitem [{\citenamefont {Chiang}\ \emph {et~al.}(2014)\citenamefont {Chiang},
  \citenamefont {Hsieh}, \citenamefont {Natarajan}, \citenamefont {Dhillon},\
  and\ \citenamefont {Tewari}}]{Chiang:2014:PCS:2627435.2638573}%
  \BibitemOpen
  \bibfield  {author} {\bibinfo {author} {\bibfnamefont {K.-Y.}\ \bibnamefont
  {Chiang}}, \bibinfo {author} {\bibfnamefont {C.-J.}\ \bibnamefont {Hsieh}},
  \bibinfo {author} {\bibfnamefont {N.}~\bibnamefont {Natarajan}}, \bibinfo
  {author} {\bibfnamefont {I.~S.}\ \bibnamefont {Dhillon}}, \ and\ \bibinfo
  {author} {\bibfnamefont {A.}~\bibnamefont {Tewari}},\ }\href
  {http://dl.acm.org/citation.cfm?id=2627435.2638573} {\bibfield  {journal}
  {\bibinfo  {journal} {J. Mach. Learn. Res.}\ }\textbf {\bibinfo {volume}
  {15}},\ \bibinfo {pages} {1177} (\bibinfo {year} {2014})}\BibitemShut
  {NoStop}%
\bibitem [{\citenamefont {Tang}\ \emph {et~al.}(2015)\citenamefont {Tang},
  \citenamefont {Chang}, \citenamefont {Aggarwal},\ and\ \citenamefont
  {Liu}}]{DBLP:journals/corr/TangCAL15}%
  \BibitemOpen
  \bibfield  {author} {\bibinfo {author} {\bibfnamefont {J.}~\bibnamefont
  {Tang}}, \bibinfo {author} {\bibfnamefont {Y.}~\bibnamefont {Chang}},
  \bibinfo {author} {\bibfnamefont {C.~C.}\ \bibnamefont {Aggarwal}}, \ and\
  \bibinfo {author} {\bibfnamefont {H.}~\bibnamefont {Liu}},\ }\href
  {http://arxiv.org/abs/1511.07569} {\bibfield  {journal} {\bibinfo  {journal}
  {CoRR}\ }\textbf {\bibinfo {volume} {abs/1511.07569}} (\bibinfo {year}
  {2015})}\BibitemShut {NoStop}%
\bibitem [{Note1()}]{Note1}%
  \BibitemOpen
  \bibinfo {note} {A positive (negative) neighbour of a node is a neighbour
  that is connected with this node via a positive (negative) link.}\BibitemShut
  {Stop}%
\bibitem [{Note2()}]{Note2}%
  \BibitemOpen
  \bibinfo {note} {When a node is removed from the network, all links connected
  to the node, including both positive and negative links, are also
  removed.}\BibitemShut {Stop}%
\bibitem [{\citenamefont {Hou}\ \emph {et~al.}(2018)\citenamefont {Hou},
  \citenamefont {Liu},\ and\ \citenamefont {Liu}}]{selfavoidwalk}%
  \BibitemOpen
  \bibfield  {author} {\bibinfo {author} {\bibfnamefont {L.}~\bibnamefont
  {Hou}}, \bibinfo {author} {\bibfnamefont {K.}~\bibnamefont {Liu}}, \ and\
  \bibinfo {author} {\bibfnamefont {J.}~\bibnamefont {Liu}},\ }in\ \href@noop
  {} {\emph {\bibinfo {booktitle} {Complex Networks {\&} Their Applications
  VI}}},\ \bibinfo {editor} {edited by\ \bibinfo {editor} {\bibfnamefont
  {C.}~\bibnamefont {Cherifi}}, \bibinfo {editor} {\bibfnamefont
  {H.}~\bibnamefont {Cherifi}}, \bibinfo {editor} {\bibfnamefont
  {M.}~\bibnamefont {Karsai}}, \ and\ \bibinfo {editor} {\bibfnamefont
  {M.}~\bibnamefont {Musolesi}}}\ (\bibinfo  {publisher} {Springer
  International Publishing},\ \bibinfo {address} {Cham},\ \bibinfo {year}
  {2018})\ pp.\ \bibinfo {pages} {935--945}\BibitemShut {NoStop}%
\bibitem [{\citenamefont {Oestreicher-Singer}\ \emph
  {et~al.}(2013)\citenamefont {Oestreicher-Singer}, \citenamefont {Libai},
  \citenamefont {Sivan}, \citenamefont {Carmi},\ and\ \citenamefont
  {Yassin}}]{rwproduct}%
  \BibitemOpen
  \bibfield  {author} {\bibinfo {author} {\bibfnamefont {G.}~\bibnamefont
  {Oestreicher-Singer}}, \bibinfo {author} {\bibfnamefont {B.}~\bibnamefont
  {Libai}}, \bibinfo {author} {\bibfnamefont {L.}~\bibnamefont {Sivan}},
  \bibinfo {author} {\bibfnamefont {E.}~\bibnamefont {Carmi}}, \ and\ \bibinfo
  {author} {\bibfnamefont {O.}~\bibnamefont {Yassin}},\ }\href {\doibase
  10.1509/jm.11.0400} {\bibfield  {journal} {\bibinfo  {journal} {Journal of
  Marketing}\ }\textbf {\bibinfo {volume} {77}},\ \bibinfo {pages} {1}
  (\bibinfo {year} {2013})},\ \Eprint
  {http://arxiv.org/abs/https://doi.org/10.1509/jm.11.0400}
  {https://doi.org/10.1509/jm.11.0400} \BibitemShut {NoStop}%
\bibitem [{\citenamefont {Masuda}\ \emph {et~al.}(2017)\citenamefont {Masuda},
  \citenamefont {Porter},\ and\ \citenamefont {Lambiotte}}]{masuda2017random}%
  \BibitemOpen
  \bibfield  {author} {\bibinfo {author} {\bibfnamefont {N.}~\bibnamefont
  {Masuda}}, \bibinfo {author} {\bibfnamefont {M.~A.}\ \bibnamefont {Porter}},
  \ and\ \bibinfo {author} {\bibfnamefont {R.}~\bibnamefont {Lambiotte}},\
  }\href@noop {} {\bibfield  {journal} {\bibinfo  {journal} {Physics Reports}\
  } (\bibinfo {year} {2017})}\BibitemShut {NoStop}%
\bibitem [{\citenamefont {Erd{\H o}s}\ and\ \citenamefont
  {R{\'e}nyi}(1959)}]{ERgraph1}%
  \BibitemOpen
  \bibfield  {author} {\bibinfo {author} {\bibfnamefont {P.}~\bibnamefont
  {Erd{\H o}s}}\ and\ \bibinfo {author} {\bibfnamefont {A.}~\bibnamefont
  {R{\'e}nyi}},\ }\href@noop {} {\bibfield  {journal} {\bibinfo  {journal}
  {Publicationes Mathematicae (Debrecen)}\ }\textbf {\bibinfo {volume} {6}},\
  \bibinfo {pages} {290} (\bibinfo {year} {1959})}\BibitemShut {NoStop}%
\bibitem [{\citenamefont {Bollob{\'a}s}(1998)}]{ERrandom}%
  \BibitemOpen
  \bibfield  {author} {\bibinfo {author} {\bibfnamefont {B.}~\bibnamefont
  {Bollob{\'a}s}},\ }\href@noop {} {\emph {\bibinfo {title} {Random graphs}}}\
  (\bibinfo  {publisher} {Springer},\ \bibinfo {year} {1998})\BibitemShut
  {NoStop}%
\bibitem [{\citenamefont {Caldarelli}\ \emph {et~al.}(2002)\citenamefont
  {Caldarelli}, \citenamefont {Capocci}, \citenamefont {De~Los~Rios},\ and\
  \citenamefont {Mu\~noz}}]{PhysRevLett.89.258702}%
  \BibitemOpen
  \bibfield  {author} {\bibinfo {author} {\bibfnamefont {G.}~\bibnamefont
  {Caldarelli}}, \bibinfo {author} {\bibfnamefont {A.}~\bibnamefont {Capocci}},
  \bibinfo {author} {\bibfnamefont {P.}~\bibnamefont {De~Los~Rios}}, \ and\
  \bibinfo {author} {\bibfnamefont {M.~A.}\ \bibnamefont {Mu\~noz}},\ }\href
  {\doibase 10.1103/PhysRevLett.89.258702} {\bibfield  {journal} {\bibinfo
  {journal} {Phys. Rev. Lett.}\ }\textbf {\bibinfo {volume} {89}},\ \bibinfo
  {pages} {258702} (\bibinfo {year} {2002})}\BibitemShut {NoStop}%
\bibitem [{\citenamefont {S\"oderberg}(2002)}]{PhysRevE.66.066121}%
  \BibitemOpen
  \bibfield  {author} {\bibinfo {author} {\bibfnamefont {B.}~\bibnamefont
  {S\"oderberg}},\ }\href {\doibase 10.1103/PhysRevE.66.066121} {\bibfield
  {journal} {\bibinfo  {journal} {Phys. Rev. E}\ }\textbf {\bibinfo {volume}
  {66}},\ \bibinfo {pages} {066121} (\bibinfo {year} {2002})}\BibitemShut
  {NoStop}%
\bibitem [{\citenamefont {Bogu\~n\'a}\ and\ \citenamefont
  {Pastor-Satorras}(2003)}]{PhysRevE.68.036112}%
  \BibitemOpen
  \bibfield  {author} {\bibinfo {author} {\bibfnamefont {M.}~\bibnamefont
  {Bogu\~n\'a}}\ and\ \bibinfo {author} {\bibfnamefont {R.}~\bibnamefont
  {Pastor-Satorras}},\ }\href {\doibase 10.1103/PhysRevE.68.036112} {\bibfield
  {journal} {\bibinfo  {journal} {Phys. Rev. E}\ }\textbf {\bibinfo {volume}
  {68}},\ \bibinfo {pages} {036112} (\bibinfo {year} {2003})}\BibitemShut
  {NoStop}%
\bibitem [{\citenamefont {Barab{\'a}si}(2016)}]{BAnetsci2016}%
  \BibitemOpen
  \bibfield  {author} {\bibinfo {author} {\bibfnamefont {A.-L.}\ \bibnamefont
  {Barab{\'a}si}},\ }\href@noop {} {\emph {\bibinfo {title} {Network
  science}}}\ (\bibinfo  {publisher} {Cambridge university press},\ \bibinfo
  {year} {2016})\BibitemShut {NoStop}%
\bibitem [{\citenamefont {Catanzaro}\ \emph {et~al.}(2005)\citenamefont
  {Catanzaro}, \citenamefont {Bogu{\~n}{\'a}},\ and\ \citenamefont
  {Pastor-Satorras}}]{configmodel3}%
  \BibitemOpen
  \bibfield  {author} {\bibinfo {author} {\bibfnamefont {M.}~\bibnamefont
  {Catanzaro}}, \bibinfo {author} {\bibfnamefont {M.}~\bibnamefont
  {Bogu{\~n}{\'a}}}, \ and\ \bibinfo {author} {\bibfnamefont {R.}~\bibnamefont
  {Pastor-Satorras}},\ }\href@noop {} {\bibfield  {journal} {\bibinfo
  {journal} {Physical Review E}\ }\textbf {\bibinfo {volume} {71}},\ \bibinfo
  {pages} {027103} (\bibinfo {year} {2005})}\BibitemShut {NoStop}%
\bibitem [{\citenamefont {Cohen}\ and\ \citenamefont {Havlin}(2003)}]{config1}%
  \BibitemOpen
  \bibfield  {author} {\bibinfo {author} {\bibfnamefont {R.}~\bibnamefont
  {Cohen}}\ and\ \bibinfo {author} {\bibfnamefont {S.}~\bibnamefont {Havlin}},\
  }\href@noop {} {\bibfield  {journal} {\bibinfo  {journal} {Physical Review
  Letters}\ }\textbf {\bibinfo {volume} {90}},\ \bibinfo {pages} {058701}
  (\bibinfo {year} {2003})}\BibitemShut {NoStop}%
\bibitem [{\citenamefont {Barab{\'a}si}\ and\ \citenamefont
  {Albert}(1999)}]{BAnetwork}%
  \BibitemOpen
  \bibfield  {author} {\bibinfo {author} {\bibfnamefont {A.-L.}\ \bibnamefont
  {Barab{\'a}si}}\ and\ \bibinfo {author} {\bibfnamefont {R.}~\bibnamefont
  {Albert}},\ }\href {\doibase 10.1126/science.286.5439.509} {\bibfield
  {journal} {\bibinfo  {journal} {Science}\ }\textbf {\bibinfo {volume}
  {286}},\ \bibinfo {pages} {509} (\bibinfo {year} {1999})},\ \Eprint
  {http://arxiv.org/abs/http://science.sciencemag.org/content/286/5439/509.full.pdf}
  {http://science.sciencemag.org/content/286/5439/509.full.pdf} \BibitemShut
  {NoStop}%
\bibitem [{\citenamefont {Szell}\ \emph {et~al.}(2010)\citenamefont {Szell},
  \citenamefont {Lambiotte},\ and\ \citenamefont {Thurner}}]{Szell03082010}%
  \BibitemOpen
  \bibfield  {author} {\bibinfo {author} {\bibfnamefont {M.}~\bibnamefont
  {Szell}}, \bibinfo {author} {\bibfnamefont {R.}~\bibnamefont {Lambiotte}}, \
  and\ \bibinfo {author} {\bibfnamefont {S.}~\bibnamefont {Thurner}},\ }\href
  {\doibase 10.1073/pnas.1004008107} {\bibfield  {journal} {\bibinfo  {journal}
  {Proceedings of the National Academy of Sciences}\ }\textbf {\bibinfo
  {volume} {107}},\ \bibinfo {pages} {13636} (\bibinfo {year} {2010})},\
  \Eprint
  {http://arxiv.org/abs/http://www.pnas.org/content/107/31/13636.full.pdf}
  {http://www.pnas.org/content/107/31/13636.full.pdf} \BibitemShut {NoStop}%
\bibitem [{\citenamefont {Schwartz}\ \emph {et~al.}(2002)\citenamefont
  {Schwartz}, \citenamefont {Cohen}, \citenamefont {ben Avraham}, \citenamefont
  {Barab\'asi},\ and\ \citenamefont {Havlin}}]{Shlomo_correlation}%
  \BibitemOpen
  \bibfield  {author} {\bibinfo {author} {\bibfnamefont {N.}~\bibnamefont
  {Schwartz}}, \bibinfo {author} {\bibfnamefont {R.}~\bibnamefont {Cohen}},
  \bibinfo {author} {\bibfnamefont {D.}~\bibnamefont {ben Avraham}}, \bibinfo
  {author} {\bibfnamefont {A.-L.}\ \bibnamefont {Barab\'asi}}, \ and\ \bibinfo
  {author} {\bibfnamefont {S.}~\bibnamefont {Havlin}},\ }\href {\doibase
  10.1103/PhysRevE.66.015104} {\bibfield  {journal} {\bibinfo  {journal} {Phys.
  Rev. E}\ }\textbf {\bibinfo {volume} {66}},\ \bibinfo {pages} {015104}
  (\bibinfo {year} {2002})}\BibitemShut {NoStop}%
\bibitem [{\citenamefont {Qu}\ \emph {et~al.}(2014)\citenamefont {Qu},
  \citenamefont {Li}, \citenamefont {Havlin}, \citenamefont {Stanley},\ and\
  \citenamefont {Wang}}]{PhysRevE.90.052811}%
  \BibitemOpen
  \bibfield  {author} {\bibinfo {author} {\bibfnamefont {B.}~\bibnamefont
  {Qu}}, \bibinfo {author} {\bibfnamefont {Q.}~\bibnamefont {Li}}, \bibinfo
  {author} {\bibfnamefont {S.}~\bibnamefont {Havlin}}, \bibinfo {author}
  {\bibfnamefont {H.~E.}\ \bibnamefont {Stanley}}, \ and\ \bibinfo {author}
  {\bibfnamefont {H.}~\bibnamefont {Wang}},\ }\href {\doibase
  10.1103/PhysRevE.90.052811} {\bibfield  {journal} {\bibinfo  {journal} {Phys.
  Rev. E}\ }\textbf {\bibinfo {volume} {90}},\ \bibinfo {pages} {052811}
  (\bibinfo {year} {2014})}\BibitemShut {NoStop}%
\bibitem [{\citenamefont {Newman}\ \emph {et~al.}(2001)\citenamefont {Newman},
  \citenamefont {Strogatz},\ and\ \citenamefont {Watts}}]{Newmanconfiguration}%
  \BibitemOpen
  \bibfield  {author} {\bibinfo {author} {\bibfnamefont {M.~E.~J.}\
  \bibnamefont {Newman}}, \bibinfo {author} {\bibfnamefont {S.~H.}\
  \bibnamefont {Strogatz}}, \ and\ \bibinfo {author} {\bibfnamefont {D.~J.}\
  \bibnamefont {Watts}},\ }\href {\doibase 10.1103/PhysRevE.64.026118}
  {\bibfield  {journal} {\bibinfo  {journal} {Phys. Rev. E}\ }\textbf {\bibinfo
  {volume} {64}},\ \bibinfo {pages} {026118} (\bibinfo {year}
  {2001})}\BibitemShut {NoStop}%
\bibitem [{\citenamefont {Li}\ \emph {et~al.}(2016)\citenamefont {Li},
  \citenamefont {Tao},\ and\ \citenamefont {Li}}]{7539632}%
  \BibitemOpen
  \bibfield  {author} {\bibinfo {author} {\bibfnamefont {X.}~\bibnamefont
  {Li}}, \bibinfo {author} {\bibfnamefont {X.}~\bibnamefont {Tao}}, \ and\
  \bibinfo {author} {\bibfnamefont {N.}~\bibnamefont {Li}},\ }\href {\doibase
  10.1109/LCOMM.2016.2599183} {\bibfield  {journal} {\bibinfo  {journal} {IEEE
  Communications Letters}\ }\textbf {\bibinfo {volume} {20}},\ \bibinfo {pages}
  {2280} (\bibinfo {year} {2016})}\BibitemShut {NoStop}%
\bibitem [{\citenamefont {Tizghadam}\ and\ \citenamefont
  {Leon-Garcia}(2010)}]{5466694}%
  \BibitemOpen
  \bibfield  {author} {\bibinfo {author} {\bibfnamefont {A.}~\bibnamefont
  {Tizghadam}}\ and\ \bibinfo {author} {\bibfnamefont {A.}~\bibnamefont
  {Leon-Garcia}},\ }in\ \href {\doibase 10.1109/INFCOMW.2010.5466694} {\emph
  {\bibinfo {booktitle} {INFOCOM IEEE Conference on Computer Communications
  Workshops , 2010}}}\ (\bibinfo {year} {2010})\ pp.\ \bibinfo {pages}
  {1--6}\BibitemShut {NoStop}%
\bibitem [{\citenamefont {Aldrich}\ \emph {et~al.}(2013)\citenamefont
  {Aldrich}, \citenamefont {Badshah}, \citenamefont {Mayton}, \citenamefont
  {Zhao},\ and\ \citenamefont {Paradiso}}]{6688590}%
  \BibitemOpen
  \bibfield  {author} {\bibinfo {author} {\bibfnamefont {M.}~\bibnamefont
  {Aldrich}}, \bibinfo {author} {\bibfnamefont {A.}~\bibnamefont {Badshah}},
  \bibinfo {author} {\bibfnamefont {B.}~\bibnamefont {Mayton}}, \bibinfo
  {author} {\bibfnamefont {N.}~\bibnamefont {Zhao}}, \ and\ \bibinfo {author}
  {\bibfnamefont {J.~A.}\ \bibnamefont {Paradiso}},\ }in\ \href {\doibase
  10.1109/ICSENS.2013.6688590} {\emph {\bibinfo {booktitle} {SENSORS, 2013
  IEEE}}}\ (\bibinfo {year} {2013})\ pp.\ \bibinfo {pages} {1--4}\BibitemShut
  {NoStop}%
\bibitem [{\citenamefont {Noulas}\ \emph {et~al.}(2012)\citenamefont {Noulas},
  \citenamefont {Scellato}, \citenamefont {Lathia},\ and\ \citenamefont
  {Mascolo}}]{6406279}%
  \BibitemOpen
  \bibfield  {author} {\bibinfo {author} {\bibfnamefont {A.}~\bibnamefont
  {Noulas}}, \bibinfo {author} {\bibfnamefont {S.}~\bibnamefont {Scellato}},
  \bibinfo {author} {\bibfnamefont {N.}~\bibnamefont {Lathia}}, \ and\ \bibinfo
  {author} {\bibfnamefont {C.}~\bibnamefont {Mascolo}},\ }in\ \href {\doibase
  10.1109/SocialCom-PASSAT.2012.70} {\emph {\bibinfo {booktitle} {Privacy,
  Security, Risk and Trust (PASSAT), 2012 International Conference on and 2012
  International Confernece on Social Computing (SocialCom)}}}\ (\bibinfo {year}
  {2012})\ pp.\ \bibinfo {pages} {144--153}\BibitemShut {NoStop}%
\bibitem [{\citenamefont {Saralegui}\ \emph {et~al.}(2016)\citenamefont
  {Saralegui}, \citenamefont {la~Sen},\ and\ \citenamefont
  {Alonso-Quesada}}]{7737565}%
  \BibitemOpen
  \bibfield  {author} {\bibinfo {author} {\bibfnamefont {U.}~\bibnamefont
  {Saralegui}}, \bibinfo {author} {\bibfnamefont {M.~D.}\ \bibnamefont
  {la~Sen}}, \ and\ \bibinfo {author} {\bibfnamefont {S.}~\bibnamefont
  {Alonso-Quesada}},\ }in\ \href {\doibase 10.1109/CONTROL.2016.7737565} {\emph
  {\bibinfo {booktitle} {2016 UKACC 11th International Conference on Control
  (CONTROL)}}}\ (\bibinfo {year} {2016})\ pp.\ \bibinfo {pages}
  {1--6}\BibitemShut {NoStop}%
\bibitem [{\citenamefont {Flory}(1953)}]{flory1953principles}%
  \BibitemOpen
  \bibfield  {author} {\bibinfo {author} {\bibfnamefont {P.~J.}\ \bibnamefont
  {Flory}},\ }\href@noop {} {\emph {\bibinfo {title} {Principles of polymer
  chemistry}}}\ (\bibinfo  {publisher} {Cornell University Press},\ \bibinfo
  {year} {1953})\BibitemShut {NoStop}%
\bibitem [{\citenamefont {Ou-Yang}\ \emph {et~al.}(2015)\citenamefont
  {Ou-Yang}, \citenamefont {Dai},\ and\ \citenamefont {Zhang}}]{7035059}%
  \BibitemOpen
  \bibfield  {author} {\bibinfo {author} {\bibfnamefont {L.}~\bibnamefont
  {Ou-Yang}}, \bibinfo {author} {\bibfnamefont {D.~Q.}\ \bibnamefont {Dai}}, \
  and\ \bibinfo {author} {\bibfnamefont {X.~F.}\ \bibnamefont {Zhang}},\ }\href
  {\doibase 10.1109/TCBB.2015.2401014} {\bibfield  {journal} {\bibinfo
  {journal} {IEEE/ACM Transactions on Computational Biology and
  Bioinformatics}\ }\textbf {\bibinfo {volume} {12}},\ \bibinfo {pages} {1333}
  (\bibinfo {year} {2015})}\BibitemShut {NoStop}%
\bibitem [{\citenamefont {Camilleri}\ \emph {et~al.}(2014)\citenamefont
  {Camilleri}, \citenamefont {Rohde},\ and\ \citenamefont
  {Twamley}}]{2014-Camilleri-p4791-4791}%
  \BibitemOpen
  \bibfield  {author} {\bibinfo {author} {\bibfnamefont {E.}~\bibnamefont
  {Camilleri}}, \bibinfo {author} {\bibfnamefont {P.~P.}\ \bibnamefont
  {Rohde}}, \ and\ \bibinfo {author} {\bibfnamefont {J.}~\bibnamefont
  {Twamley}},\ }\href {http://dx.doi.org/10.1038/srep04791} {\bibfield
  {journal} {\bibinfo  {journal} {Scientific Reports}\ }\textbf {\bibinfo
  {volume} {4}},\ \bibinfo {pages} {4791} (\bibinfo {year} {2014})}\BibitemShut
  {NoStop}%
\bibitem [{\citenamefont {Nia}\ \emph {et~al.}(2016)\citenamefont {Nia},
  \citenamefont {Atani}, \citenamefont {Fabian},\ and\ \citenamefont
  {Babulak}}]{nia2016detecting}%
  \BibitemOpen
  \bibfield  {author} {\bibinfo {author} {\bibfnamefont {M.~A.}\ \bibnamefont
  {Nia}}, \bibinfo {author} {\bibfnamefont {R.~E.}\ \bibnamefont {Atani}},
  \bibinfo {author} {\bibfnamefont {B.}~\bibnamefont {Fabian}}, \ and\ \bibinfo
  {author} {\bibfnamefont {E.}~\bibnamefont {Babulak}},\ }\href@noop {}
  {\bibfield  {journal} {\bibinfo  {journal} {Security and Communication
  Networks}\ } (\bibinfo {year} {2016})}\BibitemShut {NoStop}%
\bibitem [{\citenamefont {Lawler}(1980)}]{lawler1980}%
  \BibitemOpen
  \bibfield  {author} {\bibinfo {author} {\bibfnamefont {G.~F.}\ \bibnamefont
  {Lawler}},\ }\href {\doibase 10.1215/S0012-7094-80-04741-9} {\bibfield
  {journal} {\bibinfo  {journal} {Duke Math. J.}\ }\textbf {\bibinfo {volume}
  {47}},\ \bibinfo {pages} {655} (\bibinfo {year} {1980})}\BibitemShut
  {NoStop}%
\bibitem [{\citenamefont {{Guttmann}}(2012)}]{2012arXiv1212.3448G}%
  \BibitemOpen
  \bibfield  {author} {\bibinfo {author} {\bibfnamefont {A.~J.}\ \bibnamefont
  {{Guttmann}}},\ }\href@noop {} {\bibfield  {journal} {\bibinfo  {journal}
  {ArXiv e-prints}\ } (\bibinfo {year} {2012})},\ \Eprint
  {http://arxiv.org/abs/1212.3448} {arXiv:1212.3448 [math-ph]} \BibitemShut
  {NoStop}%
\bibitem [{\citenamefont {Tishby}\ \emph {et~al.}(2016)\citenamefont {Tishby},
  \citenamefont {Biham},\ and\ \citenamefont
  {Katzav}}]{1751-8121-49-28-285002}%
  \BibitemOpen
  \bibfield  {author} {\bibinfo {author} {\bibfnamefont {I.}~\bibnamefont
  {Tishby}}, \bibinfo {author} {\bibfnamefont {O.}~\bibnamefont {Biham}}, \
  and\ \bibinfo {author} {\bibfnamefont {E.}~\bibnamefont {Katzav}},\ }\href
  {http://stacks.iop.org/1751-8121/49/i=28/a=285002} {\bibfield  {journal}
  {\bibinfo  {journal} {Journal of Physics A: Mathematical and Theoretical}\
  }\textbf {\bibinfo {volume} {49}},\ \bibinfo {pages} {285002} (\bibinfo
  {year} {2016})}\BibitemShut {NoStop}%
\bibitem [{\citenamefont {Jinhong~Jung}\ and\ \citenamefont
  {Kang§}(2013)}]{walksigned}%
  \BibitemOpen
  \bibfield  {author} {\bibinfo {author} {\bibfnamefont {L.~S.}\ \bibnamefont
  {Jinhong~Jung}, \bibfnamefont {Woojeong~Jin}}\ and\ \bibinfo {author}
  {\bibfnamefont {U.}~\bibnamefont {Kang§}},\ }\href@noop {} {\bibfield
  {journal} {\bibinfo  {journal} {Journal of Marketing}\ }\textbf {\bibinfo
  {volume} {77}},\ \bibinfo {pages} {1} (\bibinfo {year} {2013})}\BibitemShut
  {NoStop}%
\bibitem [{\citenamefont {Li}\ \emph {et~al.}(2013{\natexlab{b}})\citenamefont
  {Li}, \citenamefont {Chen}, \citenamefont {Wang},\ and\ \citenamefont
  {Zhang}}]{diffusionsigned}%
  \BibitemOpen
  \bibfield  {author} {\bibinfo {author} {\bibfnamefont {Y.}~\bibnamefont
  {Li}}, \bibinfo {author} {\bibfnamefont {W.}~\bibnamefont {Chen}}, \bibinfo
  {author} {\bibfnamefont {Y.}~\bibnamefont {Wang}}, \ and\ \bibinfo {author}
  {\bibfnamefont {Z.-L.}\ \bibnamefont {Zhang}},\ }in\ \href {\doibase
  10.1145/2433396.2433478} {\emph {\bibinfo {booktitle} {Proceedings of the
  Sixth ACM International Conference on Web Search and Data Mining}}},\
  \bibinfo {series and number} {WSDM '13}\ (\bibinfo  {publisher} {ACM},\
  \bibinfo {address} {New York, NY, USA},\ \bibinfo {year} {2013})\ pp.\
  \bibinfo {pages} {657--666}\BibitemShut {NoStop}%
\bibitem [{\citenamefont {Kempe}\ \emph {et~al.}(2003)\citenamefont {Kempe},
  \citenamefont {Kleinberg},\ and\ \citenamefont {Tardos}}]{influencemaxi}%
  \BibitemOpen
  \bibfield  {author} {\bibinfo {author} {\bibfnamefont {D.}~\bibnamefont
  {Kempe}}, \bibinfo {author} {\bibfnamefont {J.}~\bibnamefont {Kleinberg}}, \
  and\ \bibinfo {author} {\bibfnamefont {E.}~\bibnamefont {Tardos}},\ }in\
  \href {\doibase 10.1145/956750.956769} {\emph {\bibinfo {booktitle}
  {Proceedings of the Ninth ACM SIGKDD International Conference on Knowledge
  Discovery and Data Mining}}},\ \bibinfo {series and number} {KDD '03}\
  (\bibinfo  {publisher} {ACM},\ \bibinfo {address} {New York, NY, USA},\
  \bibinfo {year} {2003})\ pp.\ \bibinfo {pages} {137--146}\BibitemShut
  {NoStop}%
\bibitem [{\citenamefont {Li}(2016)}]{epi_signed}%
  \BibitemOpen
  \bibfield  {author} {\bibinfo {author} {\bibfnamefont {B.}~\bibnamefont
  {Li}},\ }\href@noop {} {\bibfield  {journal} {\bibinfo  {journal} {Master
  thesis, Delft University of Technology}\ } (\bibinfo {year}
  {2016})}\BibitemShut {NoStop}%
\bibitem [{Note3()}]{Note3}%
  \BibitemOpen
  \bibinfo {note} {The average degree of the pruned positive or negative
  network layer at a step $t$ refers to the average degree at step $t$ over all
  the nodes and over a large number of SAP walk realisations.}\BibitemShut
  {Stop}%
\bibitem [{\citenamefont {Leskovec}\ and\ \citenamefont
  {Krevl}(2014)}]{snapnets}%
  \BibitemOpen
  \bibfield  {author} {\bibinfo {author} {\bibfnamefont {J.}~\bibnamefont
  {Leskovec}}\ and\ \bibinfo {author} {\bibfnamefont {A.}~\bibnamefont
  {Krevl}},\ }\href@noop {} {\enquote {\bibinfo {title} {{SNAP Datasets}:
  {Stanford} large network dataset collection},}\ }\bibinfo {howpublished}
  {\url{http://snap.stanford.edu/data}} (\bibinfo {year} {2014})\BibitemShut
  {NoStop}%
\end{thebibliography}%
\end{document}